\newcommand{\gap}{\vspace{0.1in}}
\newcommand{\epc}{\hspace{1pc}}
\newcommand{\thalf}{{\textstyle{\frac{1}{2}}}}
\newcommand{\wh}{\widehat}
\newcommand{\wt}{\widetilde}
\newcommand{\ball}{\mbox{I\!B}}
\newcommand{\re}{\mathbb{R}}
\newcommand{\dist}{\mbox{dist}}
\newcommand{\sgn}{\mbox{sgn }}
\newcommand{\argmin}{\mbox{argmin}}
\newcommand{\argmax}{\mbox{argmax}}
\DeclareMathOperator{\aff}{aff}
\DeclareMathOperator{\rint}{rint}
\DeclareMathOperator{\cl}{cl}
\theoremstyle{definition}
\newtheorem{theorem}{Theorem}
\newtheorem{proposition}[theorem]{Proposition}
\newtheorem{lemma}[theorem]{Lemma}
\newtheorem{corollary}[theorem]{Corollary}
\newtheorem{example}[theorem]{Example}
\title{Exact Penalization of Generalized Nash Equilibrium Problems\footnote{
Both authors are affiliated with the Daniel J.\ Epstein Department of Industrial and Systems Engineering,
University of Southern California, Los Angeles, California 90089-0193, U.S.A. {\tt Emails: qba@usc.edu, jongship@usc.edu.}
This work was based on research supported by the National Science Foundation under grant IIS-1632971 and by the Air Force
Office of Scientific Research under Grant Number FA9550-18-1-0382.}}
\author{Qin Ba \and Jong-Shi Pang}
\date{Original November 18, 2018}	
\begin{document}

\maketitle

\begin{abstract}		
\noindent This paper presents an exact penalization theory of the generalized Nash equilibrium problem (GNEP) that has its origin from the
renowned Arrow-Debreu general economic equilibrium model.  While the latter model is the foundation of much of mathematical economics,
the GNEP provides a mathematical model of multi-agent non-cooperative competition that has found many contemporary applications in diverse engineering
domains. 	The
most salient feature of the GNEP that distinguishes it from a standard non-cooperative (Nash) game is that each player's optimization
problem contains constraints that couple all players' decision variables.  Extending results for stand-alone optimization problems,
the penalization theory aims to convert the GNEP into
a game of the standard kind without the coupled constraints, which is known to be more readily amenable to solution methods and analysis.
Starting with an illustrative example to motivate the development, the paper focuses on two kinds of coupled constraints,
shared (i.e., common) and finitely representable.   Constraint residual functions and the associated error bound theory
play an important role throughout the development.
\end{abstract}

\section{Introduction}

The Generalized Nash Equilibrium Problem (GNEP) extends the classical Nash Equilibrium Problem (NEP) by allowing individual players' constraints,
in addition to the payoff functions, to depend on rival players' strategies.  This extension adds considerably descriptive and explanatory power for the GNEP
to model non-cooperative competition among multiple selfish decision-making agents.
Since its introduction in the 1950's \cite{debreu-1952-social-equil,arrow1954existence} (where the terms {\sl social equilibrium} and
{\sl abstract economy} were used),
the GNEP has risen beyond its mathematical-economic origin and become a common paradigm for various engineering disciplines.
This model fits well in a non-cooperative game setting where all players share some common resources or limitations.  Thus some of the players' constraints
are \emph{coupled}; these are in addition to the players' \emph{private} constraints that do not contain the rival players' decision variables.
Sources of the GNEP include infrastructure systems
such as communication or radio systems \cite{pang2008distributed,pang2010design}, power grids \cite{jing1999spatial,hobbs2007Nash}, transportation
networks \cite{bassanini2002allocation}, modern traffic systems with e-hailing services \cite{BanDessoukyPang2018},
supply and demand constraints for transportation systems \cite{stein2018noncooperative},
and pollution quota for environmental application \cite{Krawczyk05,breton2006game}.
We refer the readers to \cite{facchinei09GNEP,facchinei201012,fischer2014generalized}
for more detailed surveys on this active research topic and its many applications.  The four chapters by Facchinei published
in \cite[pages~151--188]{CominettiFacchineiLasserre12} contain a historical account of the GNEP, provide many references, and
summarize the advances up to the year 2012.

\gap

Due to the presence of the coupling constraints, solving a GNEP is a significantly more challenging task than solving a NEP.
Depending on the types of the coupling constraints, a GNEP can be divided into two categories.  The first category is
the {\sl shared-constrained} case, where the coupling constraints are common to all players.  [In the literature, this case has
also been called jointly constrained, or common constrained; we adopt the ``shared-constrained'' terminology in this paper.]
Many engineering problems cited in last paragraph
are of this type.  The second type is more general in which the coupling constraints can be specific to each player and different among the players.
Due to the relative simplicity and direct connections to many engineering problems, shared-constrained GNEPs receive most of the attention in the literature.
Particularly, much effort has been spent on obtaining a special kind of equilibrium solutions, called {\sl variational equilibrium}, under various settings
and assumptions.  Such an equilibrium has its origin from \cite{rosen1965existence}; the term ``variational equilibrium'' was coined
in \cite{facchinei09GNEP} and further explored in many papers that include \cite{FacchineiFischerPicciali07,facchinei201012,facchinei09GNEP,KulkarniShanbhag12}
where additional  references can be found.  In turn, the adjective
``variational'' was employed to express that this kind of equilibria of a GNEP can be described as solutions of a variational
inequality \cite{harker1991generalized,facchinei2007finite} obtained by concatenating the first-order conditions of the players' optimization problems.
Alternatively, a Nikaido-Isoda-type
bi-function \cite{NikaidoIsoda55} has been used extensively \cite{KrawczykUryasev00,Krawczyk07,von2009optimization} as an attempt to formulate
a GNEP as a single optimization problem.
% and we would formally define it in Section~\ref{sec:intro-VI-formulation}.
% A variational equilibrium solution of a jointly constrained GNEP can be obtained by solving a relevant VI \cite{facchinei2007finite}
% or a certain optimization problem involving a Nikaido-Isoda-type function \cite{von2009optimization}.
In general, the existence of a variational equilibrium often relies on the convexity and certain boundedness assumptions and suitable constraint
qualifications \cite{facchinei201012}. %, which may not be satisfied for practical problems.
While computationally simpler and providing a special kind
of solutions, a variational equilibrium does not always have a clear advantage over a non-variational equilibrium solution.
It is hence difficult to argue that special focus should be placed on finding such a particular equilibrium.
Nevertheless, computing a non-variational equilibrium solution is more complex and less understood.  Yet there has been some literature
on this subject.  For example, an extension of the class of variational equilibria is studied in \cite{fukushima2011restricted, nabetani2011parametrized},
and algorithms aiming at obtaining the entire set of solutions of special class of GNEPs are considered 
in \cite{facchinei2011computation, dreves2017computing, aussel2017sufficient}.
The paper \cite{SchiroPangShanbhag13} discusses the numerical solution of an affine GNEP by Lemke's method with an attempt to compute generalizations of
variational equilibria.

\gap

With the goal of exploiting the advances in the NEPs, this paper revisits and sharpens an exact penalization theory for GNEPs
whereby the coupled constraints, shared or otherwise, are moved from the constraints to the objective functions via constraint residual functions
scaled by a penalty parameter.  Dating back to the beginning of nonlinear programming \cite{FiaccoMcCormick68}, exact penalization for
optimization has been a subject of intensive study in the early days of optimization (see \cite{HanMangasarian79}) and much research
has been undertaken by the Italian School led by Di Pillo, Facchinei and Grippo
\cite{demyanov1998exact, di1989exact, di1992regularity, di1995exact, di1989b-exact}.  Exemplified by these works and references therein, this theory deals largely
with problems whose constraints are defined by differentiable inequalities and relies on the multipliers of these constraints.
The papers \cite{ByrdLopezCalvaNocedal12,ByrdNocedalWaltz08}
discuss how penalty functions can be used to design numerical methods for solving nonlinear programs that have been implemented in the
very successful {\sc knitro} software \cite{ByrdNocedalWaltz06}.  For the GNEP, exact penalty results are derived in
\cite{facchinei2006exact, facchinei-2011-Lampariello} which also discuss algorithmic design.  The paper \cite{facchinei-2010-penal-method}
discusses extensive computational rules for updating the penalty parameter and reports numerical results with the resulting algorithms.
Contrary to exact penalization, inexact, or asymptotic penalization also penalizes the constraints but the penalty parameter
is required to tend to infinity to recover a solution of the given problem.  A fairly simple iterative inexact-penalty based
method for solving the GNEP was discussed in \cite{pang-2009-quasi-variat}; this method cast the game as a {\sl quasi variational inequality}
obtained by concatenating the first-order optimality conditions of the players' nonlinear programs.  The recent papers \cite{KanzowSteck16,KanzowSteck18}
expanded the latter paper and discussed augmented Lagrangian and exact penalty methods for the GNEP and the more general quasi-variational inequality,
respectively.

\gap

In contrast to the multiplier-based exact penalization, the principle of exact penalization, first
established in \cite[Proposition~2.4.3]{Clarke83} and subsequently generalized in \cite[Theorem~6.8.1]{facchinei2007finite}
deals with the exact penalization of an abstract constraint set.  The latter result makes it clear that this
theory is closely connected to the theory of error bounds \cite{bauschke-1999-stron-conic,lewis1998error} and
\cite[Chapter~6]{facchinei2007finite}.  To date, this approach has not been applied to the GNEP; part of the goal
of this paper is to establish how exact penalty results for the GNEP can be derived based on an error-bound theory.  Unlike
a stand-alone optimization problem, a GNEP has multiple inter-connected optimization problems each with an objective
function of its own; thus while exact penalty results for the players' optimization problems are available, simply putting
them together does not readily yield an exact penalty result for a GNEP.  The inter-connection between the players' optimization problems
will need to be accounted for when the coupling constraints are penalized. Furthermore, while most literature on the exact penalization of 
the GNEP considers only coupling inequality constraints, 
see e.g., \cite{facchinei-2010-penal-method, facchinei-2011-Lampariello, fukushima2011restricted, KanzowSteck16}; 
another goal of this paper is to bring inequality coupling constraints to the study of exact penalization. 
By means of an illustrative example, we show that
care is needed in choosing the constraint residual functions in order for the conversion to a penalized NEP to be valid.  This example
also paves the way for the theory to be developed subsequently.  It is worth noting that the \( \ell_{q} \) residual functions for
\( q \in (1, \infty) \) present certain advantages in guaranteeing exact penalization of GNEP.  This class of residual functions is used
in \cite{facchinei-2011-Lampariello} and \cite[Chapter~3 Section~4.1]{CominettiFacchineiLasserre12}.  This paper also takes advantage of them in finitely representable cases;
Sections~\ref{sec:share-constraint} and \ref{sec:finitely representable} therein elucidate the advantageous functional
properties of \( \ell_{q} \) residual functions in exact penalization.

\section{Preliminary Discussion and Review} \label{sec:GNEP}

In this section, we present the setting of this paper and fix the notations to be used throughout.
We begin with the definition of the GNEP, then review some relevant concepts and the principle of exact penalization for an optimization
problem, and finally end with an illustration of this principle applied to a simple GNEP and a preliminary result.

\subsection{The GNEP}

Formally, a GNEP consists of \( N \) selfish players each of whom decides a strategy with the goal to minimize (or maximize) his/her cost (or payoff).
Let \( x^{\nu} \in \re^{n_{\nu}} \) denote the decision of player \( \nu \in [N] \triangleq \{ 1, \cdots, N \} \).
Anticipating the rival players' strategies \( x^{-\nu} \triangleq ( x^{\nu^{\, \prime}} )_{\nu^{\, \prime} \neq \nu} \),
player \( \nu \in [N] \) solves the following minimization problem in his/her own variable $x^{\nu}$:
\begin{equation} \label{opt:gnep}
% \begin{array}{>{\displaystyle}c>{\displaystyle}l}
\begin{array}{ll}
\displaystyle{
\operatornamewithlimits{\mbox{minimize}}_{x^{\nu}\in C^{\, \nu}}
} & \theta_{\nu}(x^{\nu}, x^{-\nu}) \\ [5pt]
\mathrm{subject to} & x^{\nu} \, \in \, X^{\nu}(x^{-\nu}),
\end{array}
\end{equation}
where \( \theta_{\nu}(x^{\nu}, x^{-\nu}) \) denotes the cost function of player \( \nu \) as a function of the pair $( x^{\nu},x^{-\nu} )$;
the constraints are of two types: those described by the
private strategy set $C^{\, \nu} \subseteq\re^{n_{\nu}}$ and the others described by the coupled constraint
set $X^{\nu}(x^{-\nu}) \subseteq\re^{n_{\nu}}$ that depends on the rival players' strategy tuple $x^{-\nu}$.
[Writing \( (x^{\nu}, x^{-\nu}) \) in the argument of \( \theta_{\nu}(\bullet) \)  is to emphasize player $\nu$'s strategy
\( x^{\nu} \) and does not mean that the tuple \( x \triangleq (x^{\nu})_{\nu =1}^{N} \in \re^{n} \), where
$n \triangleq \displaystyle{
\sum_{\nu \in [N]}
} \, n_{\nu}$, is reordered such that \( x^{\nu} \)
is the first block.]
% and \( X_{\nu}(x^{-\nu}) \subseteq \re^{n_{\nu}} \) is the constraint set of player \( \nu \) for given \( x^{-\nu} \).
This setup clearly indicates that both the objective function and constraint set of each player's optimization problem
are coupled by all players' strategies. %  Let \( x \triangleq (x^{\nu})_{\nu =1}^{N} \in \re^{n} \), where \( n \, \triangleq \,  \sum_{\nu=1}^{N} n_{\nu} \).
Let \( \theta(x) \triangleq (\theta_{\nu}(x^{\nu},x^{-\nu}))_{\nu=1}^{N} \) be the concatenation of the players' objective functions;
$C \triangleq \displaystyle{
\prod_{\nu \in [N]}
} \, C^{\, \nu}$ be the concatenation of the players' private constraints,
% a vector function mapping from \( \re^{n} \) to \( \re^{N} \)
and  \( X :\re^n \to\re^n \) with $X(x) \triangleq \displaystyle{
\prod_{\nu\in [N]}
} \, X^{\nu}(x^{-\nu})$ be the multifunction concatenating the players' coupled constraints.
We use the triplet \( (C,X;\theta) \) to denote the GNEP with players' optimization problems defined by \eqref{opt:gnep}.
A tuple $x^* \triangleq ( x^{*,\nu} )_{\nu \in [N]}$ is a {\sl Nash Equilibrium} (NE) of this game if
$x^{*,\nu} \in \displaystyle{
\operatornamewithlimits{\mbox{argmin}}_{x^{\nu} \in C^{\, \nu} \, \cap \, X^{\nu}(x^{*,-\nu})}
} \, \theta(x^{\nu},x^{*,-\nu})$ for all $\nu \in [N]$.

\gap

By considering $X^{\nu}$ as a multifunction from $\mathbb{R}^{n_{-\nu}}$ into $\mathbb{R}^{n_{\nu}}$, where
$n_{-\nu} \triangleq n - n_{\nu}$, the constraint $x^{\nu} \in X^{\nu}(x^{-\nu})$ is equivalent to the membership that
$(x^{\nu},x^{-\nu})$ belongs to the graph of $X^{\nu}$, which is denoted by $\mbox{graph}(X^{\nu}) \subseteq\re^n$.
An important special case of the GNEP \( (C,X;\theta) \) is when these graphs are all equal to a common set $D \subseteq \re^{n}$.  This is the
case of common coupled constraints, which we call shared constraints.  We denote this special case by the
triplet \( (C,D;\theta) \).  If each function $\theta_{\nu}(\bullet,x^{-\nu})$ is differentiable and convex and the set
$C  \cap D$ is convex, then a special kind of NE of the GNEP \( (C,D;\theta) \) is that derived from the variational inequality,
denoted by VI $(C \cap D; \Theta)$,
of finding a vector $\bar{x} \in C \cap D$ such that $( y - \bar{x} )^{\mathrm{T}} \Theta(\bar{x}) \geq 0$ for all $y \in C \cap D$, where
$\Theta(x) \triangleq \left( \nabla_{x^{\nu}} \theta_{\nu}(x) \right)_{\nu \in [N]}$.  Such a special NE is a called a {\sl variational equilibrium}.  When the graphs $\mbox{graph}(X^{\nu})$ are not the same, NEs of the GNEP \( (C,X;\theta) \) are related to solutions of a
quasi-variational inequality; see e.g.\ \cite{pang-2009-quasi-variat} for details.

\gap

Another important special case of the GNEP \( (C,X;\theta) \) is when each set $X^{\nu}(x^{-\nu})$ is finitely representable, i.e.,
when
\begin{equation} \label{eq:finitely representable}
X^{\nu}(x^{-\nu}) \, \triangleq \, \left\{ \, x^{\nu} \, \in \,\re^{n_{\nu}} \, \mid \, g^{\, \nu}(x^{\nu},x^{-\nu}) \, \leq \, 0
\mbox{ and } h^{\nu}(x^{\nu},x^{-\nu}) \, = \, 0 \, \right\}
\end{equation}
for some functions $( g^{\, \nu}, h^{\, \nu} ) :\re^n \to\re^{m_{\nu} + p_{\nu}}$ with some positive integers $m_{\nu}$ and $p_{\nu}$.
Typically, such constraint functions
$g^{\, \nu}$ and $h^{\nu}$ are such that $g^{\nu}(\bullet,x^{-\nu})$ is differentiable and each of its component functions $g_i^{\, \nu}(\bullet,x^{-\nu})$ is
convex and $h^{\nu}(\bullet, x^{\nu})$ is affine; in this case, the set $X^{\nu}(x^{-\nu})$ is both closed and convex;
see Section~\ref{sec:finitely representable}.

\gap

{\bf A road map.}  For the GNEP \( (C,X;\theta) \) and its special cases, exact penalization aims to convert these games with coupled constraints
to standard games
whereby the constraint set $X^{\nu}(x^{-\nu})$ for each player is moved to the objective function via a residual function (see definition in the next subsection) scaled
by a penalty parameter $\rho$, thereby obtaining a NEP \( (C;\theta_{\rho}) \) for a concatenated penalized objective $\theta_{\rho}$,
whose solution is a solution of the given GNEP.
Notice that we retain the private constraints described by the set
$C^{\, \nu}$ in each player's optimization problem.  Thus this is a type of {\sl partial} exact penalization \cite{facchinei-2011-Lampariello}.
The development in the rest of this paper is as follows.
We first present an illustrative example which reveals important insights into exact penalization of GNEP, then consider the penalization of the GNEP \( (C,D;\theta) \) with common coupled constraints and general closed convex sets $C$ and $D$ with no particular structure.  The latter part is concluded with the case where $D$ is finitely representable by convex inequalities and linear equalities.  
Finally, we discuss the
GNEP \( (C,X;\theta) \) by penalizing the coupled constraints defined as in \eqref{eq:finitely representable}.  
We should note that the literature on exact penalization of the GNEP has dealt largely with the latter problem under
constraint qualifications, such as the sequentially bounded CQ \cite{facchinei2006exact,pang-2009-quasi-variat} and the Extended Mangasarian-Fromovitz constraint qualification (EMFCQ) \cite{facchinei-2010-penal-method, facchinei-2011-Lampariello, fukushima2011restricted, KanzowSteck16}. In this paper, we have provided results for exact penalization of  GNEP under a strong descent assumption and a Lipschitz error bound assumption for residual functions, see Proposition~\ref{pr:NEP feasible yields GNEP}, Theorem~\ref{th:exact penal shared} and Corollary~\ref{co:exact penalization under sufficient}, and a Slater-type condition for the coupled constraints, see Theorem~\ref{th:exact penal shared finitely representable} and ~\ref{th:finitely representable}.

\subsection{Relevant concepts} \label{sec:concepts}
We introduce in this subsection some relevant concepts that are used later.
A function \( \Phi:\re^{n} \to \re^{m} \) is said to be \emph{Lipschitz continuous} on a set \( S \subseteq \re^{n} \) if,
for some positive scalar \( \mbox{Lip}_{\Phi} \), the following holds:
\begin{equation*}
| \, \Phi(x) - \Phi(y) \, | \, \leq \, \mbox{Lip}_{\Phi} \, \| \, x-y \, \|_{2}, \quad \forall x, y \in S.
\end{equation*}
This is also referred to as Lipschitz continuity with constant \( \mbox{Lip}_{\Phi} \). In addition, \( \Phi \) is said to be Lipschitz continuous % of rank \( \mbox{Lip}_{\Phi} \)
 {\sl near} a point \( x \)  if, for some \( \varepsilon > 0 \), \( \Phi \) is Lipschitz continuous in the \( \varepsilon \)-neighborhood of \( x \).
We hence say that $\Phi$ is {\sl locally Lipschitz continuous} on $S$ if it is Lipschitz continuous near every point in $S$.
 The {\sl directional derivative} of $\Phi$ at a point \( x \in \re^{n} \) in a direction \( d \in \re^{n} \)
is defined to be
\begin{equation}
\label{eq:def-dd}
\Phi^{\, \prime}(x; d) \, \triangleq \, \displaystyle{
\lim_{ \tau \downarrow 0}
} \, \displaystyle{
\frac{\Phi(x+\tau d) - \Phi(x)}{\tau}
}
\end{equation}
if the limit exists.  If the above limit exists for all \( d\in \re^{n} \), \( \Phi \) is said to be \emph{directionally differentiable} at \( x \).
It is clear that if \( \Phi \) is Lipschitz continuous with constant $\mbox{Lip}_{\Phi}$
near \( \bar{x} \), then % both \( \| \smd \Phi(x)(d)\| \) and
\( \| \, \Phi^{\, \prime}(\bar{x}; d) \, \| \) (if the derivative exists) is bounded by \( \mbox{Lip}_{\Phi} \, \| \, d \, \| \) for all $d$.
% The \emph{semi-directional derivative} of \( \Phi \) at \( x \in \re^{n} \) in direction \( d\in \re^{n} \)  is:
% \begin{equation}
% \label{eq:def-generalized-dd}
% \smd \Phi(x)(d) \, \triangleq \,  \liminf_{d'\to d, \tau \downarrow 0} \frac{\Phi(x+\tau d') - \Phi(x)}{\tau}
% \end{equation}
% Note the above definition does not presuppose the existence of any limit, as \( \liminf \)  is used.

% \( \partial_{\nu} \theta_{\nu}(\bar{x}) \) is the subdifferential of function \( \theta_{\nu}(\cdot, \bar{x}^{-\nu}) \) at \( \bar{x}^{\nu} \)
% for given parameter \( \bar{x}^{-\nu} \), i.e.,
%\begin{equation*}
%\partial_{\nu} \theta_{\nu}(\bar{x}) \, \triangleq \,  \{ u^{\nu} \in \re^{n_{\nu}} \,|\,
% \theta_{v}(y^{\nu}, \bar{x}^{-\nu}) \ge \theta_{\nu}(\bar{x}^{\nu}, \bar{x}^{-\nu}) + (u^{\nu})^{\mathrm{T}}(y^{\nu}-\bar{x}^{\nu}), \forall y^{\nu}\in X_{\nu}^{-\nu} \}.
%\end{equation*}

\gap

For two given closed subsets $S$ and $T$ of $\mathbb{R}^n$,
a function $r_S :\re^n \to\re_+$ is a $T$-{\sl residual function} of $S$ if
for all $x \in T$, it holds that $r_S(x) = 0$ if and only if $x \in S$, or equivalently if and only if $x \in S \cap T$.
Note that this definition allows for $S$ and $T$ to have an empty intersection; in this case, the residual function
$r_S$ is positive on $T$.  Such a residual function \( r_{S}(x) \) is said to yield a
$T$-{\sl Lipschitz error bound} of \( S \) if there exist a constant \( \gamma>0 \) such that
\begin{equation*}
\dist(x;S \cap T) \, \leq \, \gamma \, r_S(x), \quad \forall \, x \, \in \, T,
\end{equation*}
where $\dist(x;Z) \triangleq \displaystyle{
\operatornamewithlimits{\mbox{minimum}}_{y \in Z}
} \, \| \, y - x \, \|_2$ is the distance function to a closed set $Z$ under the Euclidean norm.  These definitions extend the usual
concepts of a residual function and associated error bound of a set $S$ which pertain to the case when $T =\re^n$.   In essence,
we are interested in such quantities only for elements in the subset $T$; thus the restriction to the latter set.
%
%Distance Function: Lipschitz error bound condition is trivially satisfied with \( \gamma_{2} =1 \) by distance function. Furthermore, for any \( x\not \in D \), \( \dist_{D}(x) \) is differentiable. The gradient is given by:
%\begin{equation*}
%\nabla \dist_{D}(x) = \frac{x - \proj_{D}(x)}{ \dist_{D}(x)}
%\end{equation*}
%
If $S$ is finitely representable defined by the $m$-dimensional vector function $g$ and the $p$-dimensional vector function $h$
as in (\ref{eq:finitely representable}),
we define the $\ell_q$ {\sl residual function} of \( S \) for a given \( q \in [1, \infty] \) as
\begin{equation}
\label{def:l-k-residual}
r_q(x) \, \triangleq \, \left\| \, \left( \begin{array}{c}
g^{+}(x) \\ [3pt]
h(x)
\end{array} \right) \, \right\|_q,
\end{equation}
where $g^+(x) \triangleq \max( g(x),0 )$ and $\| \bullet \|_q$ is the $\ell_q$-norm of vectors.
If each \( g_{i}, i\in [m] \) is convex and \( h \) is affine, then
% as a composition of the nondecreasing convex function $\| \bullet \|_q$ and convex functions,
% it is clear that
\( r_q \) is convex for all \( 1\le q \le \infty \).
% In general, the $\ell_q $-norm residual function
% of a finitely representable set is locally Lipschitz everywhere if the defining functions $g$ and $h$ are locally Lipschitz.
Moreover, if these defining functions are differentiable, then for
\( q \in (1, \infty) \), \( r_q(x) \) is differentiable at \( x \not \in S \).  Its gradient is given as follows:
\begin{equation}
\label{eq:lk-gradient}
\nabla r_q(x) \, = \, \displaystyle{
\frac{1}{r_q^{q-1}(x)}
} \, \left[ \, \displaystyle{
\sum_{i\in [m]}
} \left( \, g^{+}_{i}(x) \, \right)^{q-1} \, \nabla g_{i}(x) + \displaystyle{
\sum_{j\in [p]}
} \, | \, h_{j}(x) \, |^{q-1} \, ( \, \sgn h_{j}(x) \, ) \, \nabla h_{j}(x) \, \right],
\end{equation}
where \( \sgn(\bullet) \) is the sign function of a scalar defined to be 1 for positive numbers, -1 for negative numbers, and 0 for 0. In the same case, the function \( r_q(x) \) is directional differentiable at all \( x \in S \) and its directional derivative is given by:
\begin{equation} \label{eq:lk-dd}
r_q^{\, \prime}(x;d) \, = \, \left( \sum_{i:\, g_{i}(x) =0 } \, [ \, \max( \, \nabla g_{i}(x)^{\mathrm{T}}d, 0 ) \, ]^q \, + \,
\displaystyle{ \sum_{j:\, h_{j}(x) = 0}
} \, | \, \nabla h_{j}(x)^{\mathrm{T}} d \, |^q  \right)^{1/q},
\end{equation}
%\begin{equation} \label{eq:lk-dd}
%r_q^{\, \prime}(x;d) \, = \, \left( \begin{array}{l}
%\displaystyle{ \sum_{i\in \mathcal{I}^{+}(x)}
%} \, [ \, \nabla g_{i}(x)^{\mathrm{T}}d \, ]^q + \displaystyle{
%\sum_{i\in \mathcal{I}^{0}(x)}
%} \, [ \, \max( \, \nabla g_{i}(x)^{\mathrm{T}}d, 0 ) \, ]^q  +  \\ [0.2in]
%\displaystyle{ \sum_{j \in \mathcal{J}^{0}(x)}
%} \, | \, \nabla h_{j}(x)^{\mathrm{T}} d \, |^q + \displaystyle{
%\sum_{j\not \in \mathcal{J}^{0}(x)}
%} \, [ \, ( \, \sgn h_{j}(x) \, ) \, \nabla h_{j}(x)^{\mathrm{T}} d \, ]^{q}
%\end{array} \right)^{1/q},
%\end{equation}
%where \( \mathcal{I}^{+}(x) \), \( \mathcal{I}^{0}(x) \) and \( \mathcal{I}^{-}(x) \) are the sets of indices \( i \in [m] \) such that
%\( g_{i}(x) \) is positive, zero, and negative, respectively, and \( \mathcal{J}^{0}(x) \) is the set of indices \( j\in [p] \)
%such that \( h_{j}(x) =0 \).
As for \( q =1 \), \( r_1(x) \) is not differentiable but directional differentiable and its directional derivative is given by:
\begin{equation} \label{eq:l1-dd}
\begin{aligned}
r_1^{\, \prime}(x; d) \, = & \sum_{i:\, g_{i}(x) >0 } \nabla g_{i}(x)^{\mathrm{T}} d +
\sum_{i: \, g_{i}(x) = 0 } \max( \, \nabla g_{i}(x)^{\mathrm{T}} d, 0 \, ) + \\
& \sum_{j: \, h_{j}(x) =0 } | \, \nabla h_{j}(x)^{\mathrm{T}} d \, | + \sum_{j:\, h_{j}(x) \neq 0 } ( \, \sgn h_{j}(x) \, ) \, \nabla h_{j}(x)^{\mathrm{T}} d.
\end{aligned}
\end{equation}

\subsection{Partial exact penalization}

Exact penalization of GNEPs has its origin from its counterpart for solving constrained optimization problems.
As such, we begin by considering the following general constrained optimization problem:
\begin{equation} \label{opt:nlp}
% \begin{array}{ll} % >{\displaystyle}c>{\displaystyle}l}
\displaystyle{
\operatornamewithlimits{\mbox{minimize}}_{x \, \in \, W \, \cap \, S}
} \ f(x),
\end{equation}
where \( W \) and \( S \) represent two sets of constraints with \( S \) considered more complex than \( W \).
By penalizing the violation of membership in $S$, a penalty formulation transforms \eqref{opt:nlp} into the following penalized optimization problem:
\begin{equation}
\label{opt:nlp-penalized}
\displaystyle{
\operatornamewithlimits{\mbox{minimize}}_{x \, \in \, W }
} \ f(x) +  \rho \, r_{S}(x),
\end{equation}
where \( \rho > 0 \) is a \emph{penalty parameter} to be specified and \( r_{S}(x) \) is a $W$-residual function of the set \( S \).
The penalization is said to be \emph{exact} for \eqref{opt:nlp} if there exists a positive scalar \( \bar{\rho} \) such that, for all
\( \rho \geq \bar{\rho} \), every global minimizer of the penalized problem is also a global minimizer of \eqref{opt:nlp} and vice versa.
Note: Dealing with minimizers, this definition is a classical one.  In this regard, the definition is computationally most meaningful
when both (\ref{opt:nlp}) and (\ref{opt:nlp-penalized}) are convex programs.  Otherwise,
% it comes to computation.  This is because if $f$ and the sum $f + \rho \, r_S$ are not convex, then a global minimizer of either objective is
% practically not computable; in this case
the equivalence between the minimizers is a conceptual property that
% is minimally tied to computation of such minimizers although the penalization
offers guidance to solve constrained problems in practical algorithms via relaxations of some constraints.
Subsequently, when we apply this optimization theory to the GNEP, we will make some needed convexity assumptions so that we can freely talk
about minimizers.

\gap

It is important to note that the penalized problem (\ref{opt:nlp-penalized}) is different from the usual one as discussed in
\cite[Section~6.8]{facchinei2007finite} which is:
\begin{equation} \label{eq:usual penalization}
\displaystyle{
\operatornamewithlimits{\mbox{minimize}}_{x \, \in \, W }
} \ f(x) +  \rho \, r_{S \cap W}(x),
\end{equation}
where $r_{S \cap W}(x)$ is a residual function of the intersection $S \cap W$ in the unrestricted sense.  As such, while
the result below is similar to \cite[Theorem 6.8.1]{facchinei2007finite}, they differ in several ways: one, the penalized
problems are different; two, the error bound conditions are different; and three, the result below does not pre-assume the
solvability of either (\ref{opt:nlp}) or (\ref{opt:nlp-penalized}).  This is in contrast to the solvability assumption that
is needed for one inclusion of the two argmin sets in the cited theorem.  For completeness, we offer a detailed proof of the
result below which does not explicitly assume convexity of the functions involved.

% For now,
% For our purpose in this paper, penalization refers the argmin sets of the two problems (\ref{opt:nlp}) and (\ref{opt:nlp-penalized})
% when they are convex.
% \footnote{A solution of a convex minimization problem is a global minimizer. However, for a nonconvex minimization problem, various concepts
% of solution exist in literature, including \emph{global minimizer}, \emph{local minimizer}, and \emph{stationary point}.
% Using different solution concepts has great influence on associated exact penalization results for an optimization problem.}.
% we present the following result addressing the exact penalization of (\ref{opt:nlp}) in terms of (\ref{opt:nlp-penalized}) without regard to convexity.

\begin{theorem} \rm
\label{thm:nlp-exact-penalization}
Consider the nonlinear program \eqref{opt:nlp} with closed sets \( W \) and \( S \) and a Lipschitz continuous objective function $f$
with constant $\mbox{Lip}_f > 0$ on the set $W$.  Assume that $S \, \cap \, W \, \neq \, \emptyset$.
Let \( r_S(x) \) be a $W$-residual function of the set
$S$  satisfying a $W$-Lipschitz error bound for the set $S$ with constant $\gamma > 0$;
% Suppose the following are true:
% \begin{enumerate}
% \item \( f(x) \) is Lipschitz continuous with constant \( \mbox{Lip}_f \) on the set \( W \);
% \item \( r_S(x) \) fulfills a $T$-Lipschitz error bound with a constant $\gamma > 0$ for the set $Z$,
i.e., \( r_S(x) \geq \gamma \dist(x;S \cap W) \) for all \( x \in W\).
% \end{enumerate}
Then for all \( \rho > \mbox{Lip}_f/ \gamma \),
\begin{equation*}
\underset{x\in S \cap W}{\argmin} \, f(x) \, = \, \underset{x \in W }{\argmin} \, f(x) + \rho \, r_{S}(x).
\end{equation*}
% Moreover, equality holds if additionally \eqref{opt:nlp} has an optimal solution.   \hfill $\Box$
% then for all \( \rho \ge L/ \gamma \),
% \begin{equation*}
%  \underset{x\in W }{\argmin} \, f(x) + \rho \, r_{S}(x) \, \subseteq \, \underset{x\in W\cap S}{\argmin} \, f(x).
% \end{equation*}
\end{theorem}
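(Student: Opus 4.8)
The plan is to prove the asserted equality of the two argmin sets by establishing both inclusions, with projection onto the closed set $S \cap W$ serving as the central tool. Write $F_\rho(x) \triangleq f(x) + \rho\, r_S(x)$ for the penalized objective, and record the key fact that, because $r_S$ is a $W$-residual function of $S$, we have $r_S(x) = 0$ for every $x \in S \cap W$; hence $F_\rho$ agrees with $f$ on $S \cap W$. Since $S$ and $W$ are closed and $S \cap W \neq \emptyset$, the intersection is a nonempty closed set, so the Euclidean projection of any point onto $S \cap W$ exists and realizes the distance $\dist(\,\cdot\,; S \cap W)$.

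For the inclusion $\subseteq$, I would take $x^* \in \argmin_{x \in S \cap W} f(x)$, so that $F_\rho(x^*) = f(x^*)$, and show $F_\rho(y) \geq f(x^*)$ for every $y \in W$. Choosing a nearest point $\bar y \in S \cap W$ to $y$, Lipschitz continuity of $f$ on $W$ gives $f(y) \geq f(\bar y) - \mbox{Lip}_f \, \|y - \bar y\|_2$, while the $W$-Lipschitz error bound gives $r_S(y) \geq \gamma\, \dist(y; S \cap W) = \gamma \, \|y - \bar y\|_2$. Combining these with $f(\bar y) \geq f(x^*)$ yields $F_\rho(y) \geq f(x^*) + (\rho\gamma - \mbox{Lip}_f)\|y - \bar y\|_2$, and since $\rho > \mbox{Lip}_f/\gamma$ the coefficient is nonnegative, so $F_\rho(y) \geq f(x^*) = F_\rho(x^*)$. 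Thus $x^*$ minimizes $F_\rho$ over $W$.

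For the reverse inclusion I would first argue that any $x^* \in \argmin_{x \in W} F_\rho(x)$ must already be feasible, i.e.\ lie in $S \cap W$. Arguing by contradiction, if $\dist(x^*; S \cap W) > 0$, let $\bar x$ be its projection onto $S \cap W$; then the same two estimates, now anchored at $x^*$, give $F_\rho(x^*) \geq F_\rho(\bar x) + (\rho\gamma - \mbox{Lip}_f)\|x^* - \bar x\|_2 > F_\rho(\bar x)$, contradicting minimality of $x^*$ over $W$ since $\bar x \in W$. Hence $r_S(x^*) = 0$, so $x^* \in S \cap W$ and $F_\rho(x^*) = f(x^*)$; comparing against any $z \in S \cap W \subseteq W$, where $F_\rho(z) = f(z)$, gives $f(x^*) \leq f(z)$, so $x^*$ minimizes $f$ over $S \cap W$.

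The step I expect to be the main obstacle is the feasibility argument in the reverse inclusion: it is precisely here that the \emph{strict} inequality $\rho > \mbox{Lip}_f/\gamma$ (rather than $\geq$) is essential, since ruling out an infeasible penalized minimizer requires a strictly negative net change when projecting onto $S \cap W$, whereas the forward inclusion tolerates equality. The remaining steps are routine once the projection-plus-error-bound estimate is in hand, and no convexity of $f$, $S$, or $W$ is invoked anywhere.
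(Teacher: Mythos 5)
Your proposal is correct and follows essentially the same route as the paper's proof: both inclusions are established by projecting onto $S \cap W$ and combining the Lipschitz bound on $f$ with the $W$-Lipschitz error bound, with the reverse inclusion hinging on the strict inequality $\rho\gamma > \mbox{Lip}_f$ to force feasibility of the penalized minimizer. The only cosmetic difference is that you phrase the feasibility step as a contradiction on positive distance, whereas the paper chains the inequalities into a cycle that forces $\|x^* - \wt{x}\|_2 = 0$.
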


\begin{proof}  In the following two-part proof, we may assume without loss of generality that both argmin sets are nonempty.
We first show that the left-hand argmin is a subset of the right-hand argmin.  Let $\bar{x}$ be an element of
$\displaystyle{
\operatornamewithlimits{\mbox{argmin}}_{x \in S \cap W}
} \, f(x)$.  Let $x \in W$ be arbitrary.  Let $\wh{x}$ be a vector in $S \, \cap \, W$ such that $\| x - \wh{x} \|_2 = \dist(x;S \cap W)$.
We have
\[ \begin{array}{llll}
f(x) + \rho \, r_{S}(x) & \geq & f(x) + \rho \, \gamma \, \| \, x - \wh{x} \, \|_2 & \mbox{by the $W$-Lipschitz error bound} \\ [5pt]
& \geq & f(\wh{x}) - \mbox{Lip}_f \, \| \, x - \wh{x} \, \|_2 + \rho \, \gamma \, \| \, x - \wh{x} \, \|_2 & \mbox{by the Lipschitz continuity of $f$} \\ [5pt]
& \geq & f(\wh{x}) & \mbox{by the choice of $\rho$} \\ [5pt]
& \geq & f(\bar{x}) \, = \, f(\bar{x}) + \rho \, r_{S}(\bar{x}) & \mbox{by the optimality of $\bar{x}$}.
\end{array} \]
This establishes one inclusion.  For the converse, let $x^* \in \displaystyle{
\operatornamewithlimits{\mbox{argmin}}_{x \in W}
} \, f(x) + \rho \, r_{S}(x)$.  It suffices to show that $x^* \in S$.  Let $\wt{x}$ be a vector in $S \, \cap \, W$ such that $\| x^{*} - \wt{x} \|_2 = \dist(x^{*};S \cap W)$.
We have
\[ \begin{array}{llll}
f(\wt{x}) & \geq & f(x^*) + \rho \, r_{S}(x^*) & \mbox{by the optimality of $x^*$} \\ [5pt]
& \geq & f(x^*) + \rho \, \gamma \, \| \, x^* - \wt{x} \, \|_2 & \mbox{by the $W$-Lipschitz error bound} \\ [5pt]
& \geq & f(\wt{x}) - \mbox{Lip}_f \, \| \, x^* - \wt{x} \|_2 +  \rho \, \gamma \, \| \, x^* - \wt{x} \, \|_2 & \mbox{by the Lipschitz continuity of $f$} \\ [5pt]
& \geq & f(\wt{x}) & \mbox{by the choice of $\rho$}.
\end{array} \]
Therefore, $x^* = \wt{x}$, establishing the claim and the equality of the two sets of minimizers.
\end{proof}

We will apply Theorem~\ref{thm:nlp-exact-penalization}
to the GNEP \( (C,X; \theta) \) by penalizing the coupled constraint set $X^{\nu}(x^{-\nu})$ in
each player $\nu$'s optimization problem.  Letting $C^{\, -\nu} \triangleq \displaystyle{
\prod_{\nu^{\, \prime} \neq \nu}
} \, C^{\, \nu^{\, \prime}}$ and \( r_{\nu}(\bullet,x^{-\nu}) \) be a $C^{\, \nu}$-residual function of
the set $X^{\nu}(x^{-\nu})$ for given \( x^{-\nu} \in C^{\, -\nu} \), we obtain
% A penalty method for the GNEP is to apply penalization to each player's subproblem. Take \( \mathcal{G}(C\cap X, \theta) \) for example,
% by penalizing the violation of the constraint set \( X_{\nu}(x^{_{\nu}}) \) of player \( \nu \)'s subproblem, we obtain
the following penalized subproblem for player \( \nu \) with respect to given \( x^{-\nu} \):
\begin{equation} \label{opt:jointly-cvx-game-penalty}
%\tag{\(\mathbf{P}_{i}^{\rho}\)}
\displaystyle{
\operatornamewithlimits{\mbox{minimize}}_{x^{\nu} \, \in \, C^{\, \nu}}
} \ \theta_{\nu}(x^{\nu},x^{-\nu}) + \rho \, r_{\nu}(x^{\nu}, x^{-\nu})
\end{equation}
where $\rho > 0$ is a penalty parameter which we take for simplicity to be the same for all players.
Let \( r_{X}(x) \triangleq (r_{\nu}(x))_{\nu=1}^{N} \) and $\theta_{\rho;X} \triangleq \theta + \rho \, r_X$.
Then the optimization problems \eqref{opt:jointly-cvx-game-penalty} concatenated
for all $\nu \in [N]$ define the game \( (C;\theta_{\rho;X}) \) which is clearly a NEP.
% We note that here for simplicity, we use the same penalty parameter for all players' subproblem, which is not necessarily the case in general. Similarly,
We say that exact penalization holds for the GNEP \( (C,X;\theta) \) with residual function \( r_{X} \) if there exists a scalar
\( \bar{\rho} > 0 \) such that, for all \( \rho \ge \bar{\rho} \), every equilibrium solution of the NEP \( (C;\theta_{\rho;X}) \)
is also an equilibrium solution of the GNEP \( (C,X;\theta) \) and vise versa.  The following example drawn from \cite{facchinei201012} illustrates this penalization.

\begin{example} \rm
\label{eg:non-variational-equiv}
Consider the shared-constrained GNEP \( (C,D;\theta) \) with the following specifications: $N = 2$; $n_1 = n_2 = 1$;
$\theta_{1}(x) = (x_{1}-1)^{2}$; $\theta_{2}(x) = (x_{2}-1/2)^{2}$; \( C = \re^{2} \), and
\( D= \{x\in \re^{2} \, | \, x_{1} + x_{2} \le 0 \} \).
The set of equilibria of this GNEP is \( \{(\alpha, 1- \alpha)\, |\, 1/2 \le \alpha \le 1\} \).
The set of variational equilibria of the GNEP is a singleton containing \( (3/4, 1/4) \).  With \( \rho>0 \) as the penalty parameter, the
penalized NEP is defined by the following two univariate optimization problems:
\begin{equation*}
\displaystyle{
\operatornamewithlimits{\mbox{minimize}}_{x_1}
} \ (x_{1}-1)^{2} + \rho \, ( \, x_{1}+x_{2}-1 \, )^{+} \quad
\displaystyle{
\operatornamewithlimits{\mbox{minimize}}_{x_2}
} \ \left( \, x_{2} - \thalf \, \right)^{2} + \rho \, ( \, x_{1}+x_{2}-1 \, )^{+}
\end{equation*}
that employ the common residual function $r_D(x) = ( \, x_{1}+x_{2}-1 \, )^{+}$.
The set of equilibria of the latter NEP is obtained by investigating the three cases shown in Table~\ref{tab:three-cases}.
\begin{table}[htbp]
   \centering
   %\topcaption{Table captions are better up top} % requires the topcapt package
   \begin{tabular}{@{} cccc @{}} % Column formatting, @{} suppresses leading/trailing space
%      \toprule
\hline
          & \( x_{1}^{*} \)  & \( x_{2}^{*} \)  & equilibria set for \( \rho \ge 1 \)  \\
%      \midrule
\hline
      \( x_{1}+x_{2} > 1 \)       & \( 1-\rho/2 \)  & \( 3/2 - \rho \)  & \( \emptyset \)  \\
       \( x_{1}+x_{2} < 1 \)       & \( 1 \)   & \( 3/2 \)  & \( \emptyset \)  \\
       \( x_{1}+x_{2} = 1 \)       & \( [1-\rho/2, 1] \)   & \( [1/2-\rho/2, 1/2] \)  &  \( \{(\alpha, 1- \alpha)\, |\, 1/2 \le \alpha \le 1\} \)  \\
%      \bottomrule
\hline
   \end{tabular}
   \caption{Three cases of the penalized problem}
   \label{tab:three-cases}
\end{table}
From the table, we conclude that the given GNEP has an exact penalization as a standard NEP.  \hfill $\Box$

% Table~\ref{tab:three-cases} shows that the entire set of equilibria of  \( \mathcal{G}(C\cap D, \theta) \) is obtained via penalty method.
\end{example}

The primary question this paper aims at addressing is under what conditions exact penalization holds for the GNEP \( (C,X; \theta) \) and
its special cases.
While various exact penalization results exist for the optimization problem~\eqref{opt:nlp} in the literature, stand-alone
results on the GNEP are not as many; they are mostly embedded in methods for solving the game under some CQs without necessarily
claiming the exactness of the penalization.
A main difficulty in extending the penalty results for an optimization problem to the GNEP lies in the fact that
while each player's optimization problem~\eqref{opt:gnep} may have an exact
penalty equivalent, the penalty parameter for player $\nu$'s problem in principle depends on the rivals' strategy tuple $x^{-\nu}$;
this makes it difficult for a uniform penalty parameter to exist for all players in an NEP formulation.
A related point is highlighted by the proof of the converse inclusion in
Theorem~\ref{thm:nlp-exact-penalization} which suggests that the feasibility of a solution of the penalized NEP is key
for that solution to be a solution of the original un-penalized game.  This point was made clear in \cite[Theorem~1]{facchinei2006exact}
and the discussion that follows it where an explanation was offered.
We formally state this requirement in the first part of the following preliminary result whose proof we omit.

\begin{proposition} \label{pr:NEP feasible yields GNEP} \rm
Suppose that $C$ and the graph of the constraint multifunction $X$ are closed sets.
Let $r_{\nu}(\bullet,x^{-\nu})$ be a $C^{\, \nu}$-residual function of the set $X^{\nu}(x^{-\nu})$,
for all $x^{-\nu} \in C^{\, -\nu}$.  The following two statements hold:
\begin{enumerate}
\item[\rm (a)] Let $x^*$ be a solution of the penalized NEP \( (C;\theta_{\rho;X}) \) for some $\rho > 0$.  Then
$x^*$ is a solution of the GNEP $(C,X;\theta)$
if and only if $x^*$ belongs to the graph of the multifunction $X$; i.e., $x^{*,\nu} \in X^{\nu}(x^{*,-\nu})$ for all
$\nu \in [ N ]$.
\item[\rm (b)]
% Conversely, let $x^*$ is a solution of the GNEP $(C,X;\theta)$.
Suppose that for every $x^{-\nu} \in C^{\, -\nu}$, $C^{\, \nu} \, \cap \, X^{\nu}(x^{-\nu}) \neq \emptyset$, 
$\theta_{\nu}(\bullet,x^{-\nu})$ is Lipschitz continuous with constant $\mbox{Lip}_{\nu}$ on the set $C^{\, \nu}$,
and $r_{\nu}(\bullet,x^{-\nu})$ provides a $C^{\, \nu}$-Lipschitz error bound with constant $\gamma_{\nu} > 0$
for the set $X^{\nu}(x^{-\nu})$.
Then for all $\rho > \displaystyle{
\max_{\nu \in [N]}
} \, \mbox{Lip}_{\nu}/\gamma_{\nu}$, every equilibrium solution of the penalized NEP \( (C;\theta_{\rho;X}) \)
is an equilibrium solution of the GNEP $(C,X;\theta)$  and vice versa.   \hfill $\Box$
\end{enumerate}
\end{proposition}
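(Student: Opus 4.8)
The plan is to handle the two parts separately, in each case reducing the game-level claim to a statement about a single player's subproblem with the rivals' strategy frozen, and then concatenating over $\nu \in [N]$.

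For part (a), the ``only if'' direction is immediate from the definition of a Nash equilibrium: if $x^*$ solves the GNEP $(C,X;\theta)$, then for every $\nu$ the strategy $x^{*,\nu}$ minimizes $\theta_{\nu}(\bullet,x^{*,-\nu})$ over $C^{\,\nu} \cap X^{\nu}(x^{*,-\nu})$, so in particular $x^{*,\nu} \in X^{\nu}(x^{*,-\nu})$ and $x^*$ lies in the graph of $X$. For the ``if'' direction I would fix a player $\nu$ and invoke the defining property of the $C^{\,\nu}$-residual function. Since $x^*$ lies in the graph, $x^{*,\nu} \in X^{\nu}(x^{*,-\nu})$ gives $r_{\nu}(x^{*,\nu},x^{*,-\nu}) = 0$, and likewise $r_{\nu}(x^{\nu},x^{*,-\nu}) = 0$ for every competing feasible $x^{\nu} \in C^{\,\nu} \cap X^{\nu}(x^{*,-\nu})$. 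Substituting these two vanishing residuals into the penalized optimality inequality that $x^{*,\nu}$ satisfies for \eqref{opt:jointly-cvx-game-penalty} collapses it to $\theta_{\nu}(x^{*,\nu},x^{*,-\nu}) \leq \theta_{\nu}(x^{\nu},x^{*,-\nu})$, which is exactly player $\nu$'s un-penalized optimality condition. As $\nu$ is arbitrary, $x^*$ solves the GNEP.

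For part (b), the key observation is that for each fixed $x^{-\nu} \in C^{\,-\nu}$ player $\nu$'s subproblem is an instance of the stand-alone program \eqref{opt:nlp} with $W = C^{\,\nu}$, $S = X^{\nu}(x^{-\nu})$, objective $f = \theta_{\nu}(\bullet,x^{-\nu})$, and residual $r_S = r_{\nu}(\bullet,x^{-\nu})$; the penalized version \eqref{opt:jointly-cvx-game-penalty} is precisely \eqref{opt:nlp-penalized}. The nonemptiness hypothesis $C^{\,\nu}\cap X^{\nu}(x^{-\nu}) \neq \emptyset$, the Lipschitz constant $\mbox{Lip}_{\nu}$, and the error-bound constant $\gamma_{\nu}$ supply exactly the hypotheses of Theorem~\ref{thm:nlp-exact-penalization}, which then yields, for every $\rho > \mbox{Lip}_{\nu}/\gamma_{\nu}$ and every $x^{-\nu} \in C^{\,-\nu}$, the equality of argmin sets
\[
\underset{x^{\nu}\in C^{\,\nu}\cap X^{\nu}(x^{-\nu})}{\argmin}\theta_{\nu}(x^{\nu},x^{-\nu}) \;=\; \underset{x^{\nu}\in C^{\,\nu}}{\argmin}\,\theta_{\nu}(x^{\nu},x^{-\nu})+\rho\,r_{\nu}(x^{\nu},x^{-\nu}).
\]
Taking $\rho > \max_{\nu\in[N]}\mbox{Lip}_{\nu}/\gamma_{\nu}$ makes this hold simultaneously for all players. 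To pass to the game level, I would note that any equilibrium of the penalized NEP and any equilibrium of the GNEP necessarily lies in $C$, so the relevant rival tuple $x^{*,-\nu}$ always belongs to $C^{\,-\nu}$ and the displayed equality may be invoked at $x^{-\nu} = x^{*,-\nu}$. If $x^*$ solves the penalized NEP, then $x^{*,\nu}$ lies in the right-hand argmin for each $\nu$, hence in the left-hand argmin, giving player $\nu$'s GNEP condition; concatenating over $\nu$ shows $x^*$ solves the GNEP, and the reverse implication is identical with the two argmin sets interchanged.

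The step I expect to carry the real weight is the \emph{uniformity} of the constants in the rivals' strategies. Theorem~\ref{thm:nlp-exact-penalization} produces a threshold $\mbox{Lip}_f/\gamma$ for each frozen $x^{-\nu}$, and the whole argument hinges on these thresholds being bounded independently of $x^{-\nu}$ so that the single parameter $\max_{\nu}\mbox{Lip}_{\nu}/\gamma_{\nu}$ works for all players at once. This is precisely the difficulty flagged before the proposition, that a per-player penalty parameter ``in principle depends on the rivals' strategy tuple $x^{-\nu}$''; here it is defused by hypothesis, since $\mbox{Lip}_{\nu}$ and $\gamma_{\nu}$ are assumed uniform over $x^{-\nu} \in C^{\,-\nu}$.
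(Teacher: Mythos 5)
Your proof is correct: part (a) follows from the defining property of the $C^{\,\nu}$-residual functions, and part (b) is exactly the intended player-by-player invocation of Theorem~\ref{thm:nlp-exact-penalization} at the frozen rival tuple $x^{*,-\nu}\in C^{\,-\nu}$, with the uniformity of $\mbox{Lip}_{\nu}$ and $\gamma_{\nu}$ over $C^{\,-\nu}$ doing the work you identify. The paper deliberately omits the proof of this proposition, but your argument is the one its surrounding discussion points to, so there is nothing further to reconcile.
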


One of the main weakness of part (b) of the above proposition is the assumption that $C^{\, \nu} \, \cap \, X^{\nu}(x^{-\nu})$ is nonempty
for all $x^{-\nu}$ in $C^{\, -\nu}$.  The invalidity of this assumption contributes to inequality between the set of
penalized NE and that of the NE of the GNEP; see parts (a) of Example~\ref{eg:3-penalty-funs} below.
Subsequently, we will provide alternative conditions that bypass this assumption;
see Theorem~\ref{th:exact penal shared} and~\ref{th:finitely representable}.

\section{An Illustrative Example} \label{sec:illustrative}

Before presenting the main results, we offer an example that summarizes several important features in the penalization of
the shared-constrained GNEP \( (C,D; \theta) \) by residual functions of the common constraint set $D$.
This example provides the basis for the subsequent results that are motivated by the various penalized NEPs discussed therein.

% \subsection{An Example} \label{sec:example}
\begin{example} \label{eg:3-penalty-funs}
Consider the share-constrained GNEP \( (C,D; \theta) \) with the following specifications: $N = 2$; $n_1 = n_2 = 1$; \( \theta_{1}(x) = -x_{1} \);
\( \theta_{2}(x) = -x_{2} \); \( C = [0, 4] \times [0, 4] \); \( D= \{ x\in \re^{2} \, |\, g_{i}(x) \le 0, i=1, 2, 3 \} \); where
\( g_{1}(x) = x_{1} + x_{2} - 2 \), \( g_{2}(x) = 2 x_{1} - x_{2} - 2 \), and \( g_{3} = -x_{1} + 2x_{2} -2 \).
The constraint set \( C \cap D \) is shown as the gray area in Figure~\ref{fig:eg-3-penalty-funs}.
% \( \mathcal{G}(C\cap D, \theta) \) is a potential game \footnote{A potential game is a GNEP whose equilibria can be obtained
% by solving a single optimization problem involving a potential function.} and
It is easy to see that every point on the line segment between \( (4/3, 2/3) \) and \( (2/3, 4/3) \) (i.e., the solid dark line segment in
the figure) is a NE of this GNEP, for which we illustrate various penalizations as follows.

\begin{figure}[htbp] %  figure placement: here, top, bottom, or page
   \centering
   \includegraphics[width=0.4\linewidth]{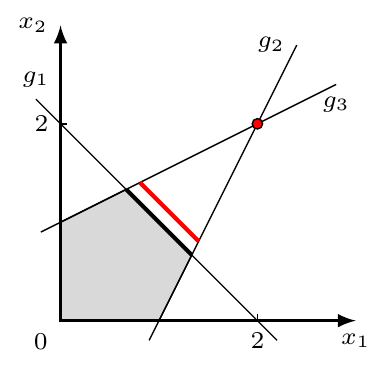}
   \caption{Graph illustration for Example~\ref{eg:3-penalty-funs}}
   \label{fig:eg-3-penalty-funs}
\end{figure}
\end{example}

(a) {\bf \( \ell_{1} \) penalization is not exact.}  Using the \( \ell_{1} \) residual function \( r_1(x) = \| g^{+}(x) \|_{1} \),
the penalized version of the GNEP \( (C,D; \theta) \) is the NEP \( (C; \theta + \rho \, r_1) \).
While the \( \ell_{1} \) residual function is a widely used residual function that is proven to be effective for exact penalization of many
nonlinear programming problems, this example shows that it does not work generally for exact penalization of even shared, linearly constrained GNEPs.
Specifically, we can find an equilibrium solution of the penalized NEP \( (C; \theta + \rho \, r_1) \) but not one of the GNEP \( (C,D; \theta) \).  The two players' optimization problems for the latter NEP % \( \mathcal{G}(C, \theta + \rho \, r_{1}) \)
are as follows:
\begin{gather*}
\displaystyle{
\operatornamewithlimits{\mbox{minimize}}_{0\le x_1\le 4}
} \  -x_{1} + \rho \, \left( \, [  x_{1} + x_{2} - 2 ]^{+} + [ 2 x_{1} - x_{2} - 2 ]^{+} + [ -x_{1} + 2x_{2} -2 ]^{+} \, \right) \\
\displaystyle{
\operatornamewithlimits{\mbox{minimize}}_{0\le x_2 \le 4}
} \   -x_{2} + \rho \, \left( \, [  x_{1} + x_{2} - 2 ]^{+} + [ 2 x_{1} - x_{2} - 2 ]^{+} + [ -x_{1} + 2x_{2} -2 ]^{+} \, \right).
\end{gather*}
By examining the different pieces of the square box \( C \) separated by the three lines in the plane, we can conclude that, for \( \rho > 1 \),
the set of equilibria of the NEP \( (C; \theta + \rho \, r_1) \) is \( \{ (x_{1}, x_{2}) \,|\, x_{1}+x_{2} =2, \, 2/3 \le x_{1} \le 4/3 \} \cup (2, 2) \).
It is clear that \( (2, 2) \) is not an equilibrium solution of the given GNEP \( (C,D; \theta) \).

\gap

This negative result is very informative; namely it suggests that unlike the case of an optimization problem,
one needs to be selective in the penalty function in the case of the GNEP; see Theorem~\ref{th:exact penal shared}.

\gap

(b) {\bf Squared \( \ell_2 \) penalization is not exact}. % for finite \( \rho \). }
The penalized problem with the squared quadratic residual function
\( r_2^2(x) = \| g^{+}(x) \|_2^2 \) is the NEP \( (C; \theta + \rho \, r^{2}_{2}) \).  It is easy to verify that for any \( \rho >0 \),
every point in \( \{x\in C\,|\, g_{1}(x) = 1/(2\rho), \, g_{2}(x) \le 0, \, g_{3}(x)\le 0\} \) (i.e., the red line segment in Figure~\ref{fig:eg-3-penalty-funs})
is an equilibrium solution of this penalized NEP.  
Even though as \( \rho \) gets to infinity, these points converge to equilibria of the original GNEP, for any finite \( \rho \) they do not belong to \( C \cap D \), and thus are not equilibria of the original GNEP.

\gap

(c) {\bf \( \ell_{2} \) penalization is exact.} The penalized problem with the \( \ell_{2} \) residual function is the NEP \( (C; \theta + \rho \, r_{2}) \).
By \eqref{eq:lk-gradient}, at \( x\not\in D \), the gradient of \( r_{2}(x) \) is:
\begin{equation*}
\nabla r_{2}(x) = \frac{1}{r_{2}(x)} \left(
\begin{array}{c}
g_{1}^{+}(x) + 2 g^{+}_{2}(x)- g^{+}_{3}(x) \\ [5pt]
g_{1}^{+}(x) -  g^{+}_{2}(x) + 2 g^{+}_{3}(x)
\end{array} \right)
\end{equation*}
% Since both players in \( \mathcal{G}(C, \theta + \rho \, r_{2}) \) have differentiable costs,
For any \( x \)  in the interior of \( C \setminus D \), one can write down the equilibrium conditions of this NEP as
\[
\left( \begin{array}{c}
-1 \\ [3pt]
-1
\end{array} \right) + \, \rho \, \nabla r_2(x) \, = \, 0
\]
which can be shown to have no solution for $\rho > 1$. Similarly, by checking the respective optimality conditions of the two players' optimization problems, one can conclude that any point
on the boundary of \( C \setminus D \) 
 %\( \{(x_{1}, x_{2})\,|\, 1< x_{1} \le 4, x_{2} = 0\} \cup \{(x_{1}, x_{2})\,|\, x_{1} =0, 1< x_{2} \le 4\} \)  
 is not be an equilibrium solution of the NEP \( (C; \theta + \rho \, r_{2}) \) either.
% \( (C, \theta + \rho \, r_{2}) \).
Furthermore, by  part (a) of Proposition~\ref{pr:NEP feasible yields GNEP},
it is straightforward to see that any point of \( C \cap D \) that is not an equilibrium solution of the GNEP \( (C,D; \theta) \)
cannot be one of the NEP \( (C; \theta + \rho \, r_{2}) \).  The set of remaining feasible points of NEP \( (C; \theta + \rho \, r_{2}) \) is the equilibria set of the GNEP \( (C,D; \theta) \), i.e., \( \{ (x_{1}, x_{2}) \,|\, x_{1}+x_{2} =2, 2/3 \le x_{1} \le 4/3 \} \).  Using the directional derivative formula given in \eqref{eq:lk-dd},
it is possible to verify that every point in this set is an equilibrium solution of the penalized NEP \( (C; \theta + \rho \, r_{2}) \).

\gap

(d) {\bf \( \ell_{1} \) penalization is exact for variational equilibria.} In this example, every equilibrium solution of
the GNEP \( (C,D; \theta) \) is a variational equilibrium and hence a solution to the VI \( (C \cap D; \Theta) \).  By penalizing the constraint set \( D \)
of this VI using the \( \ell_{1} \) residual function (same as part (a)), we obtain the mutivalued VI \( (C; \Theta + \rho \, \partial r_{1}) \),
where  $\partial$ denotes the subdifferential of a convex function as in convex analysis.
It is not difficult to see that the solution set of the penalized VI agrees with the solution set of VI \( (C \cap D; \Theta ) \),
and hence with the equilibrium set of the GNEP \( (C,D; \theta) \), for all \( \rho \ge 1 \).  This conclusion should be contrasted with part (a)
where the penalization of this GNEP is via the NEP there.  In terms of it first-order conditions, the latter NEP is equivalent to the
multivalued VI \( (C; \Theta + \rho \, R) \), where
$R$ is the multifunction $R(x) = \displaystyle{
\prod_{i=1}^3
} \, \partial_{x_i} r_1(x)$ which is a superset of $\partial r_1(x)$.  Thus it is not unexpected that the NEP may have more solutions than the
given GNEP, as confirmed by part (a). 

\gap

(e) {\bf \( \ell_{1} \) penalization is exact when $C$ is restricted.}  As a result of the failed exact penalization
of the GNEP \( (C,D; \theta) \) in terms of the NEP $(C; \theta + \rho \, r_1)$ demonstrated in part (a), one could raise the question
of whether the given GNEP may be equivalent to a NEP $( \wt{C}, \theta + \rho \, r_1)$ for a certain Cartesian subset $\wt{C}$ of $C$.
It turns out that this question has an affirmative answer for this example. Indeed, a natural way to define this restricted set is
$\wt{C} = \wt{C}_1 \times \wt{C}_2$ with
\[
\wt{C}_1 \, \triangleq \, \left\{ \, x_1 \, \in \, C_1 \, \mid \, \exists \, x_2 \, \in \, C_2 \mbox{ such that } ( x_1,x_2 ) \, \in \, D \, \right\}
\]
and similarly for $\wt{C}_2$.  For this example, we have
% {\bf \( (\tilde{C}, \theta + \rho \, r_{1}) \) has the same set of equilibria as
% the GNEP \( (C,D; \theta) \).}
\( \wt{C}_{1} = \wt{C}_{2} = [0, 4/3] \).  We leave it for the reader to verify that
\[ \begin{array}{rl}
&\mbox{set of equilibrium solutions of the NEP \( (\wt{C}; \theta + \rho \, r_{1}) \)} \\ [5pt]
\epc = \, &\mbox{set of equilibrium solutions of the GNEP \( ( \wt{C}, D; \theta) \)} \\ [5pt]
\epc = \, &\mbox{set of equilibrium solutions of the GNEP \( (C, D; \theta ) \)}.
\end{array} \]
The equality between the first and second set illustrates part (a) of Proposition~\ref{pr:NEP feasible yields GNEP}. It is possible to extend the above observation to more general cases. The formal study on this is to be pursued in the future.
%For \( \mathcal{G}(\tilde{C}, \theta + \rho l_{1}) \), the two players' subproblems are as follows:
%\begin{gather*}
%\min_{0\le x_{1} \le 4/3}\,  -x_{1} + \rho \left(  [  x_{1} + x_{2} - 2 ]^{+} + [2 x_{1} - x_{2} - 2]^{+} + [-x_{1} + 2x_{2} -2]^{+} \right) \\
%\min_{0\le x_{2} \le 4/3}\,  -x_{2} + \rho \left(  [  x_{1} + x_{2} - 2 ]^{+} + [2 x_{1} - x_{2} - 2]^{+} + [-x_{1} + 2x_{2} -2]^{+} \right)
%\end{gather*}
\hfill $\Box$
% it is clear that, for \( \rho >1 \), the set of equilibrium solution of
% \( \mathcal{G}(\tilde{C}, \theta + \rho l_{1}) \) is \( \{ (x_{1}, x_{2}) \,|\, x_{1}+x_{2} =2, 2/3 \le x_{1} \le 4/3 \} \),
% which is the same as that of the GNEP \( (C,D; \theta) \).  \hfill $\Box$

\gap

An important difference between the $\ell_1$ and $\ell_2$ penalization is that the residual function in the latter case
is differentiable at points outside the shared constraint set $D$
whereas that in the former is not.   This observation motivates the following consideration about the choice of the residual function
$r_D$.  Namely, including the distance function to the set $D$ when the latter is closed and convex, the residual function $r_D$ needs to be at the
minimum directionally differentiable.  However, this alone is not enough as illustrated by part (a) in the above example.  Thus a suitable
condition on the directional derivatives outside the set $D$ is needed; such a requirement is made explicit in Theorem~\ref{th:exact penal shared} in the next section.

\section{Main Results for the GNEP $(C,D;\theta)$} \label{sec:share-constraint}

Due to the common shared constraint set $D$, we employ a common residual function that applies to all players' individual optimization problems;
see Examples~\ref{eg:non-variational-equiv} and \ref{eg:3-penalty-funs} for such a function.
Specifically, throughout this section, we let $r_D :\re^n \to\re_+$ be a $C$-residual function of the shared constrained set $D$;
i.e., for all $x \in C$, $r_D(x) = 0$ if and only if $x \in D$.
It then follows that for each $x^{-\nu} \in C^{-\, \nu}$, $r_D(\bullet,x^{-\nu})$ is
a $C^{\, \nu}$-residual function of the set $D^{\, \nu}(x^{-\nu}) \triangleq \{ y^{\nu} \mid ( y^{\nu},x^{-\nu} ) \in D \}$.
Conversely, we can construct such an (aggregate) residual function $r_D$
from the latter sets $D^{\, \nu}(x^{-\nu})$.  Indeed, for each $\nu \in [ N ]$, let $r_{\nu} :\re^n \to\re_+$ be such that
for each $x^{-\nu} \in C^{-\, \nu}$, $r_{\nu}(\bullet,x^{-\nu})$ is a $C^{\, \nu}$-residual function of the set $D^{\, \nu}(x^{-\nu})$. It is
then easy to verify that with $r_D(x) \triangleq \displaystyle{
\sum_{\nu \in [ N ]} } \, r_{\nu}(x^{\nu},x^{-\nu})$, it holds that for all $x \in C$, $r_D(x) = 0$ if and only if $x \in D$.
According to part (a) of Proposition~\ref{pr:NEP feasible yields GNEP}, a key requirement for a solution of the penalized
NEP $(C;\theta + \rho \, r_D)$, where $\theta + \rho \, r_D$ is a shorthand for the vector function $( \theta_{\nu} + \rho \, r_D )_{\nu \in [ N ]}$,
to be a solution of the GNEP $(C,D;\theta)$ is the membership of the candidate solution on hand in the shared constraint set $D$.
Moreover, as mentioned above, the directional differentiability of the residual function $r_D$ needs to be strengthened.
Taking into account these two considerations and adapting a key quantity introduced in \cite{demyanov1998exact},
we present the following Strong Descent Assumption on residual functions. It would be shown to play a vital role in guaranteeing exact penalization of a GNEP.
For a set  \( S \subseteq \re^{n} \) and a vector \( x \in S \), let \( \mathcal{T}(x; S) \) denote the {\sl tangent cone} at \( x \)
to  \( S \), i.e., the collection of vectors \( d \in \re^{n} \) such that \( d = \displaystyle{
\lim_{k\to \infty}
} \, \displaystyle{
\frac{x^{k} -x}{\tau_{k}}
} \) for some vector sequence \( \{x^{k}\} \subset S \) converging to \( x \) and some positive scalar sequence \( \{\tau_{k}\} \)
with \( \tau_{k} \downarrow 0 \).

\gap

{\bf Strong Descent Assumption}. Let \( S \) and \( W \) be two closed sets and \( r_{S} \)  be a \( W \)-residual function of \( S \).
There exists a constant \( \alpha>0 \) such that for every \( x \in W\setminus S \), \( r_{S}^{\, \prime}(x; d) \le -\alpha \|d\|_{2} \)
for some nonzero \( d \in \mathcal{T}(x; W) \).  \hfill $\Box$

\gap

The above assumption appears in \cite[Theorem~3.2]{demyanov1998exact} in the context of a single optimization problem.
While it will be discussed in some more detail momentarily, the assumption essentially requires that a strong descent direction
of the residual function exists at every infeasible point \( x\in W\setminus S \).  Considering the optimization
problem \eqref{opt:nlp} with Lipschitz continuous function \( f(x) \), the strong descent assumption would disqualify
every point in \( W\setminus S \) as a minimizer of \eqref{opt:nlp-penalized} for sufficiently large \( \rho \).
Specialized to each player \( \nu \)'s optimization problem and to the associated residual function \( r_{D}(\bullet, x^{-\nu}) \),
this argument leads to the next result on the exact penalization of the GNEP  \( (C, D; \theta) \),
where convexity of the sets \( C \), \( D \) and the functions \( \theta_{\nu}(\bullet, x^{-\nu}) \) is not needed.
%It would be shown later in Section~\ref{sec:general-case} as well to prove exact penalization of the GNEP \( (C, X; \theta) \) (cf.  Theorem~\ref{thm:exact-penal-CX}).

\begin{theorem}  \label{th:exact penal shared}
Suppose that \(C\), \( D \) are closed subsets of \( \re^{n} \). Assume that each \( \theta_{\nu}(\bullet, x^{-\nu}) \) is directionally differentiable and locally Lipschitz continuous with constant \( \text{Lip}_{\theta} >0 \) on \( C^{\, \nu} \) for all \( x^{-\nu} \in C^{\, -\nu} \).
Let \( r_{D} \) be a directionally differentiable \( C \)-residual function of \( D \).  If there exist positive constants $\alpha$ and $\alpha^{\, \prime}$
such that for every \( x \in C\setminus D\), either one of the following holds:
\begin{enumerate}[label= (\alph*)]
\item for some \( \nu \in [N] \) and nonzero \( d^{\, \nu} \in \mathcal{T}(x^{\nu}; C^{\, \nu}) \subseteq \re^{n_{\nu}} \),
\begin{equation} \label{eq:negative-dd}
r_{D}(\bullet, x^{-\nu})^{\, \prime}(x^{\nu}; d^{\, \nu}) \, \leq \, - \alpha^{\, \prime} \, \| \, d^{\, \nu} \, \|_{2}
\end{equation}
\item for some nonzero \( d \in \mathcal{T}(x; C) \),
\begin{equation} \label{eq:bounded-dd}
\displaystyle{
\sum_{\nu \in [N]}
}  \, r_D(\bullet, x^{-\nu})^{\,\prime}(x^{\nu}; d^{\, \nu}) \, \leq \, r_{D}^{\,\prime}(x; d) \, \leq \, - \alpha \, \| \, d \, \|_{2},
\end{equation}
\end{enumerate}
then there exists a finite number \( \bar{\rho} \) such that for every \( \rho > \bar{\rho} \), every equilibrium solution of the
NEP \( (C; \theta + \rho r_{D}) \) is an equilibrium solution of the GNEP \( (C, D; \theta) \).
\end{theorem}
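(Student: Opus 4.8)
The plan is to reduce exact penalization to a feasibility statement and then to preclude infeasible penalized equilibria by a first-order descent argument driven by hypotheses (a)--(b). By part~(a) of Proposition~\ref{pr:NEP feasible yields GNEP}, a solution $x^*$ of the penalized NEP $(C;\theta + \rho\,r_D)$ solves the GNEP $(C,D;\theta)$ if and only if $x^* \in D$. Since any equilibrium of the NEP automatically satisfies $x^* \in C$, it suffices to show that, for $\rho$ large enough, no $x^* \in C\setminus D$ can be a penalized equilibrium. Arguing by contradiction, suppose $x^* \in C\setminus D$ is such an equilibrium. For each $\nu$, the block $x^{*,\nu}$ globally minimizes the directionally differentiable function $\psi_{\nu}(\bullet) \triangleq \theta_{\nu}(\bullet,x^{*,-\nu}) + \rho\,r_D(\bullet,x^{*,-\nu})$ over $C^{\,\nu}$, so the first-order necessary optimality condition yields
\begin{equation*}
\theta_{\nu}(\bullet,x^{*,-\nu})^{\,\prime}(x^{*,\nu};d^{\,\nu}) + \rho\,r_D(\bullet,x^{*,-\nu})^{\,\prime}(x^{*,\nu};d^{\,\nu}) \;\geq\; 0, \qquad \forall\, d^{\,\nu} \in \mathcal{T}(x^{*,\nu};C^{\,\nu}).
\end{equation*}

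With this inequality in hand I would treat the two hypotheses separately. Under (a), I take the distinguished player $\nu$ and its descent direction $d^{\,\nu}\in\mathcal{T}(x^{*,\nu};C^{\,\nu})$, bound the payoff term by the local Lipschitz constant via $\theta_{\nu}(\bullet,x^{*,-\nu})^{\,\prime}(x^{*,\nu};d^{\,\nu}) \le \mbox{Lip}_{\theta}\,\|d^{\,\nu}\|_{2}$, and invoke \eqref{eq:negative-dd} for the residual term; this gives $0 \le (\mbox{Lip}_{\theta} - \rho\,\alpha^{\,\prime})\,\|d^{\,\nu}\|_{2}$, impossible once $\rho > \mbox{Lip}_{\theta}/\alpha^{\,\prime}$. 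Under (b), I use that $C=\prod_{\nu}C^{\,\nu}$ forces $\mathcal{T}(x^*;C)=\prod_{\nu}\mathcal{T}(x^{*,\nu};C^{\,\nu})$, so each block $d^{\,\nu}$ of the global descent direction $d$ is admissible in player $\nu$'s necessary condition. Summing the $N$ inequalities, bounding $\sum_{\nu}\theta_{\nu}(\bullet,x^{*,-\nu})^{\,\prime}(x^{*,\nu};d^{\,\nu}) \le \mbox{Lip}_{\theta}\sum_{\nu}\|d^{\,\nu}\|_{2} \le \sqrt{N}\,\mbox{Lip}_{\theta}\,\|d\|_{2}$ by Cauchy--Schwarz, and applying the outer inequality of \eqref{eq:bounded-dd} to the residual sum, I obtain $0 \le (\sqrt{N}\,\mbox{Lip}_{\theta} - \rho\,\alpha)\,\|d\|_{2}$, impossible for $\rho > \sqrt{N}\,\mbox{Lip}_{\theta}/\alpha$. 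Taking $\bar{\rho} = \max\{\mbox{Lip}_{\theta}/\alpha^{\,\prime},\,\sqrt{N}\,\mbox{Lip}_{\theta}/\alpha\}$ then forces $x^* \in D$, and Proposition~\ref{pr:NEP feasible yields GNEP}(a) finishes the argument.

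The delicate step, and the one I expect to be the main obstacle, is justifying the first-order necessary condition $\psi_{\nu}^{\,\prime}(x^{*,\nu};d^{\,\nu})\ge 0$ over the \emph{full} tangent cone. Directional differentiability only controls $\psi_{\nu}$ along the ray $x^{*,\nu}+\tau d^{\,\nu}$, whereas a tangent vector $d^{\,\nu}$ is in general the limit of difference quotients along a curved feasible sequence $y^{k}\to x^{*,\nu}$ in $C^{\,\nu}$; bridging this gap requires $\psi_{\nu}$ to be locally Lipschitz near $x^{*,\nu}$, so that the deviation $y^{k}-(x^{*,\nu}+\tau_{k}d^{\,\nu})=o(\tau_{k})$ is absorbed into an $o(\tau_{k})$ error in the expansion of $\psi_{\nu}(y^{k})-\psi_{\nu}(x^{*,\nu})$. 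The payoff part has this regularity by hypothesis, and the argument needs the analogous local Lipschitz property of $r_D(\bullet,x^{*,-\nu})$ at the infeasible point $x^{*,\nu}\in C^{\,\nu}\setminus D^{\,\nu}(x^{*,-\nu})$ (which is automatic for the $\ell_{q}$ residuals of Section~\ref{sec:finitely representable}); I would make this requirement explicit. A secondary point to verify is that hypothesis (b) genuinely furnishes a single direction $d$ whose blocks $d^{\,\nu}$ serve all players at once, which is precisely what the product structure of the tangent cone of $C$ guarantees.
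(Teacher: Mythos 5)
Your proof is correct and follows essentially the same route as the paper: both reduce the claim via Proposition~\ref{pr:NEP feasible yields GNEP}(a) to excluding infeasible penalized equilibria and derive a contradiction from the first-order necessary condition combined with the strong-descent hypothesis; the only cosmetic difference is that the paper disposes of case (b) by observing it implies case (a) with $\alpha^{\,\prime} = \alpha/\sqrt{N}$ (via $\|d\|_2 \ge N^{-1/2}\sum_{\nu}\|d^{\,\nu}\|_2$), whereas you treat (b) directly by summing the $N$ player inequalities --- both yield the same threshold $\sqrt{N}\,\mbox{Lip}_{\theta}/\alpha$. The subtlety you flag --- that $\psi_{\nu}^{\,\prime}(x^{*,\nu};d^{\,\nu})\ge 0$ over the full tangent cone of a merely closed $C^{\,\nu}$ requires local Lipschitz continuity (or similar regularity) of the residual term, which the theorem does not hypothesize --- is genuine and is passed over silently in the paper's proof, so making that requirement explicit, or assuming $C^{\,\nu}$ convex so that tangent directions are limits of feasible rays, is a worthwhile refinement.
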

\begin{proof}  Since $\| \, d \, \|_2 \, \geq \, \displaystyle{
\frac{1}{\sqrt{N}}
} \, \displaystyle{
\sum_{\nu \in [N]}
} \, \| \, d^{\, \nu} \, \|_2$, it follows that if (b) holds for some $\alpha$, then (a) holds with $\alpha^{\, \prime} = \alpha/\sqrt{N}$.
Hence it suffices to prove the theorem under condition (a).
% We first show the assertion holds under \eqref{eq:negative-dd}.
Let \( \bar{x} \) be an equilibrium solution of the
NEP \( (C; \theta + \rho \, r_D) \). By part (a) of Proposition~\ref{pr:NEP feasible yields GNEP}, it suffices to show that $\bar{x} \in D$.
Assume for contradiction that \( \bar{x} \in C \setminus D \). By assumption, there exist
\( \nu \in [N] \) and \( d^{\, \nu}\in \mathcal{T}(\bar{x}^{\nu}; C^{\, \nu}) \) such that
\[
r_{D}(\bullet, \bar{x}^{-\nu})^{\, \prime}(\bar{x}^{\nu}; d^{\, \nu}) \, \leq \, - \alpha^{\, \prime} \, \| \, d^{\, \nu} \, \|_{2}.
\]
By the optimality of $\bar{x}^{\nu}$ to player $\nu$'s optimization problem in the NEP \( (C; \theta + \rho \, r_D) \),
we have % for all \( \nu \in [N] \) and all \( d^{\, \nu} \in \mathcal{T}(x^{\nu}; C^{\, \nu}) \),
\begin{equation*}
\theta_{\nu}(\bullet, \bar{x}^{-\nu})^{\prime}(\bar{x}^{\nu};d^{\, \nu}) +
\rho \, r_{D}(\bullet,\bar{x}^{-\nu})^{\prime}(\bar{x}^{\nu};d^{\, \nu}) \, \ge \, 0
\end{equation*}
By the Lipschitz continuity of $\theta_{\nu}(\bullet,x^{-\nu})$ with constant $\mbox{Lip}_{\theta}$, we obtain
\[
| \,  \theta_{\nu}(\bullet, \bar{x}^{-\nu})^{\prime}(\bar{x}^{\nu};d^{\, \nu}) \, |
\, \leq \, \mbox{Lip}_{\theta} \, \| \, d^{\, \nu} \, \|_2.
\]
Hence, It follows that
\[
\mbox{Lip}_{\theta} \, \| \, d^{\, \nu} \, \|_2 - \rho \, \alpha^{\, \prime} \, \| \, d^{\nu} \, \|_{2}  \, \geq \, 0.
\]
By choosing of $\bar{\rho} =\triangleq \mbox{Lip}_{\theta}/ \alpha^{\, \prime}$, the above inequality yields a contradiction.
%
% \gap
%
% We next show the assertion holds under \eqref{eq:bounded-dd}. It is sufficient to prove that \eqref{eq:bounded-dd}
% implies \eqref{eq:negative-dd}. According to \eqref{eq:bounded-dd}, there exists \( \alpha>0 \) and \( d\in \mathcal{T}(x; C) \) satisfying
% \begin{equation*}
% \sum_{\nu \in [N]}  \, r_D(\bullet, x^{-\nu})^{\prime}(x^{\nu}; d^{\, \nu}) \le - \alpha \|d\|_{2} \le - \alpha \sum_{\nu\in [N]} \|d^{\nu}\|_{2}
% \end{equation*}
% where \( d^{\nu}\in \mathcal{T}(x^{\nu}; C^{\nu}) \) follows from the fact that \( \mathcal{T}(x; C) \equiv \Pi_{\nu\in [N]} \mathcal{T}(x^{\nu}; C^{\nu}) \).
% The above inequality clearly implies that \eqref{eq:negative-dd} holds for some \( \nu \in [N] \) and \( d^{\nu} \in  \mathcal{T}(x^{\nu}; C^{\nu})  \).
\end{proof}

Theorem~\ref{th:exact penal shared} is a game-theoretic extension of Theorem~\ref{thm:nlp-exact-penalization}
which pertains to a single optimization problem. It is useful to note that the assumption of a Lipschitz
error bound on the residual function required in the former theorem is replaced by the strong descent assumptions in the present
game result. Furthermore, in contrary to part (b) of Proposition~\ref{pr:NEP feasible yields GNEP}, Theorem~\ref{th:exact penal shared}
does not require $C^{\, \nu} \cap D^{\, \nu}(x^{-\nu})$ to be nonempty for all \( x^{-\nu} \in C^{-\nu} \).  As shown in
Example~\ref{eg:3-penalty-funs}, the later nonemptiness condition may be too restrictive even for a simple linearly constrained GNEP. We also note that part (a) of Theorem~\ref{th:exact penal shared} can be extended in a straightforward way to the general case of GNEP \( (C, X; \theta) \). The assumptions required there would involve the strong descent assumption of \( r_{\nu}(\bullet, x^{-\nu}) \). While this result is not explicitly stated in the paper, another sufficient condition would be provided for the GNEP \( (C, X; \theta) \) in the case that \( X_{\nu}(\bullet) \) is finitely representable, see Theorem~\ref{th:finitely representable}.

\gap

In the remaining of this section, due to the relatively simpler condition, we will focus on refining part (b) of Theorem~\ref{th:exact penal shared}
by providing some sufficient conditions for \eqref{eq:bounded-dd} to hold.
The inequality on the left side of \eqref{eq:bounded-dd} is the sum property on the total directional derivative
$r_D^{\, \prime}(x;y - x)$ with regard to the partial directional derivatives $r_D(\bullet,x^{-\nu})^{\prime}(x^{\nu};y^{\, \nu} - x^{\nu})$ for $\nu \in [N]$.
The study of this relationship for a general function is originated from \cite{Auslender76} (see also \cite{Tseng02}).
It is clear that when \( r_{D} \) is convex, this inequality is equivalent to equality as the directional derivative
of a convex function is subadditive in the direction.  Sufficient conditions for the equality include G\^{a}teaux differentiability, and a certain
strong property of the directional derivatives that can be found in \cite{Robinson91}.  In particular, if the residual function $r_D$ is differentiable on
$C \setminus D$ (to be precise, differentiable on a set $\Omega \, \setminus \, D$, where
$\Omega$ is a open set containing the set $C$), then the left side inequality in \eqref{eq:bounded-dd} holds readily.
This differentiability assumption will be in place in the rest of the section.

\gap

The inequality on the right side of \eqref{eq:bounded-dd} is related to the inequality (15) imposed in Theorem~3.2 in \cite{demyanov1998exact}
that is for the total penalization of a single optimization problem.  The cited theorem dealt with the optimization problem $\displaystyle{
\operatornamewithlimits{\mbox{minimize}}_{x \in D}} \ \theta(x)$ where $D$ is a subset of $\mathbb{R}^n$ and assumed
in essence to satisfy the property that there exist positive scalars $\alpha$ and $\varepsilon$
such that for all $x \in D_{\varepsilon} \setminus D$, some $y \in\re^n \setminus \{ x \}$ exists satisfying
$r_D^{\, \prime}(x;y - x) \, \leq \, -\alpha \, \| \, y - x \, \|$, where $D_{\varepsilon}$ is an  \( \varepsilon \)-enlargement of the set $D$;
there was no second set $C$ involved.  For partial penalization, the incorporation of the set \( C \) in the strong descent assumption appears
to be new.  While this restriction reduces the set of potential descent directions from \( \re^{n} \) to \( \mathcal{T}(x; C) \),
it also excludes certain infeasible points under consideration, thus facilitating the exact penalization.
% This can be beneficial for the
% fulfillment of the assumption, especially when \( C \) is compact.
We will provide a more detailed discussion on this point in subsequent subsections.

%Thus the discussion will be aimed at the right side inequality in \eqref{eq:bounded-dd} with the main task being to identify
% the constant $\alpha$ and the tangent vector $d \in \mathcal{T}(x; C)$ that corresponds to any given $x \in C \setminus D$.

\subsection{Linear metric regularity} \label{subsec:lmr}

The literature on Lipschitz error bounds for closed convex sets is vast.  Summarizing previous papers that include \cite{pang1997error,lewis1998error,Klatte1999} and
many others, Chapter~6 in the monograph \cite{facchinei2007finite} provides a good reference for this theory and its applications.  Closely related to
Lipschitz error bounds is the theory of linear metric regularity; see \cite{bauschke-1999-stron-conic,ng2004-regul-their}.  For our purpose, we can
rephrase the strong descent assumption in terms of the latter theory.
Specifically, two closed convex subsets $C$ and $D$ of $\mathbb{R}^n$ with a nonempty intersection are said to be {\sl linearly metrically regular}
if there exists a constant $\gamma^{\, \prime} > 0$ such that
\[
\dist(x;C \cap D) \, \leq \, \gamma^{\, \prime} \, \max\left( \, \dist(x;C), \, \dist(x;D) \, \right), \epc \forall \, x \, \in \,\re^n.
\]
We remark that linear metric regularity is equivalent to the existence of a constant $\gamma_1 > 0$ such that
\begin{equation} \label{eq:equivalent lmr}
\dist(x;C \cap D) \, \leq \, \gamma_1 \, \dist(x; D), \epc \forall \, x \, \in \, C.
\end{equation}
The above equivalence is known in the literature, e.g., see \cite[Theorem 3.1]{ng2004-regul-their}. Yet we provide a proof here for completeness. Clearly, the former implies the latter with $\gamma_1 = \gamma$.  If the latter holds, let $x \in\re^n$ be arbitrary and let
$y$ be the Euclidean projection of $x$ onto $C$.  We then have
\[ \begin{array}{lll}
\dist(x;C \cap D) & \leq & \dist(x;C) + \dist(y;C \cap D) \\ [5pt]
& \leq & \dist(x;C) + \gamma_1 \, \dist(x; D) \, \leq \, ( \, \gamma_1 + 1 \, ) \,
\max\left( \, \dist(x;C), \, \dist(x;D) \, \right),
\end{array} \]
establishing the equivalence of linear metric regularity and the inequality (\ref{eq:equivalent lmr}).  It follows from this equivalence
that if $C$ and $D$ are linearly metrically regular, then a $C$-residual function $r_D$ of the set $D$ satisfies a $C$-Lipschitz error bound
if there exists a constant $\gamma > 0$ such that $\dist(x;D) \leq \gamma \, r_D(x)$ for all $x \in C$, the latter condition being
a usual error bound for $D$ with reference to vectors in $C$.

\gap

The result below shows that linear metric regularity (or equivalently (\ref{eq:equivalent lmr})) provides a sufficient condition for the strong descent assumption to hold.

\begin{proposition} \label{pr:linear metric implies error} \rm
Suppose the two closed convex sets $C$ and $D$ with nonempty intersection are linearly metrically regular with constant $\gamma^{\, \prime}$.  Let $r_D$
be a convex residual function of $D$ satisfying a $C$-Lipschitz error bound for $D$ with constant $\gamma > 0$.  If $r_D$
is differentiable on the set $C \setminus D$, then \eqref{eq:bounded-dd} holds.
\end{proposition}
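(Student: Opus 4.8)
The plan is to verify \eqref{eq:bounded-dd} pointwise on $C \setminus D$ by producing an explicit descent direction via projection onto the feasible intersection $C \cap D$. First observe that the left-hand inequality of \eqref{eq:bounded-dd} is free under the hypotheses: since $r_D$ is differentiable at every $x \in C \setminus D$, the total directional derivative splits exactly as $r_D^{\,\prime}(x;d) = \nabla r_D(x)^{\mathrm T} d = \sum_{\nu \in [N]} \nabla_{x^{\nu}} r_D(x)^{\mathrm T} d^{\nu}$ for any direction $d = (d^{\nu})_{\nu}$, and $\nabla_{x^{\nu}} r_D(x)^{\mathrm T} d^{\nu} = r_D(\bullet, x^{-\nu})^{\,\prime}(x^{\nu}; d^{\nu})$, so that inequality holds with equality (as already remarked in the discussion following Theorem~\ref{th:exact penal shared}). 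The entire burden therefore rests on the right-hand inequality $r_D^{\,\prime}(x;d) \le -\alpha \|d\|_2$.

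To this end I would fix $x \in C \setminus D$ and let $\bar{x}$ be the Euclidean projection of $x$ onto the closed, convex, nonempty set $C \cap D$, so that $\|\bar{x}-x\|_2 = \dist(x; C \cap D)$. Set $d \triangleq \bar{x} - x$. Because $x \notin D$ while $\bar{x} \in D$, we have $d \neq 0$; and because $C$ is convex with $x, \bar{x} \in C$, the segment $[x,\bar{x}]$ lies in $C$, so $d \in \mathcal{T}(x; C)$. Thus $d$ is an admissible nonzero direction for condition (b). Since $r_D$ is convex and differentiable at $x$, the gradient inequality gives $r_D(\bar{x}) \ge r_D(x) + \nabla r_D(x)^{\mathrm T}(\bar{x}-x)$; and since $\bar{x} \in C \cap D$ and $r_D$ is a $C$-residual function of $D$, we have $r_D(\bar{x}) = 0$. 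Hence $r_D^{\,\prime}(x;d) = \nabla r_D(x)^{\mathrm T} d \le -r_D(x)$.

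It remains to bound $r_D(x)$ below by a fixed multiple of $\|d\|_2 = \dist(x; C \cap D)$. The key point is that the error-bound hypothesis only controls the distance to $D$, which can be far smaller than the distance to the feasible set $C \cap D$; this is exactly where linear metric regularity enters. Combining the error bound with (\ref{eq:equivalent lmr}) (valid for $x \in C$) produces a single constant $\alpha > 0$ with $r_D(x) \ge \alpha \, \dist(x; C \cap D) = \alpha \|d\|_2$ for all $x \in C$; concretely, $r_D(x) \ge \tfrac{1}{\gamma}\dist(x;D) \ge \tfrac{1}{\gamma \gamma_1}\dist(x; C \cap D)$, so $\alpha = 1/(\gamma\gamma_1)$ works uniformly in $x$. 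Chaining this with the previous step yields $r_D^{\,\prime}(x;d) \le -\alpha\|d\|_2$, which together with the automatic left inequality establishes \eqref{eq:bounded-dd}.

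The one genuinely load-bearing step, and the place I expect the argument must be handled with care, is the passage from a residual bound on $\dist(\cdot; D)$ to one on $\dist(\cdot; C \cap D)$: the natural descent direction is the projection residual $\bar{x}-x$ whose length is $\dist(x; C \cap D)$, so only a lower bound on $r_D(x)$ in terms of that quantity produces descent at the required rate $\|d\|_2$. Linear metric regularity is precisely the hypothesis that makes this conversion legitimate; everything else (admissibility of $d$ in $\mathcal{T}(x;C)$, the vanishing of $r_D$ on $C \cap D$, and the gradient inequality) is a direct consequence of convexity and differentiability. I note in passing that if one reads the $C$-Lipschitz error bound in the sense $\dist(x; C \cap D) \le \gamma\, r_D(x)$ of the definition in Section~\ref{sec:concepts}, then $\alpha = 1/\gamma$ already suffices and linear metric regularity is not separately invoked; the argument above is robust to either reading.
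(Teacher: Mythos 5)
Your proposal is correct and follows essentially the same route as the paper's proof: project $x$ onto $C\cap D$, use convexity of $r_D$ to get $r_D^{\,\prime}(x;\bar{x}-x)\le -r_D(x)$, and then chain the $C$-Lipschitz error bound for $D$ with linear metric regularity (in the form (\ref{eq:equivalent lmr})) to bound $r_D(x)$ below by a multiple of $\dist(x;C\cap D)=\|\bar{x}-x\|_2$. Your explicit treatment of the left-hand inequality via differentiability and of the membership $\bar{x}-x\in\mathcal{T}(x;C)$ only makes explicit what the paper leaves implicit.
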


\begin{proof} It remains to show the right side of \eqref{eq:bounded-dd} holds with $\alpha = \displaystyle{
\frac{1}{\gamma \, \gamma^{\, \prime}}
}$.  Let $x \in C \setminus D$ and $y$ be the Euclidean projection of $x$ onto the intersection $C \cap D$.  By the convexity of
the residual function $r_D$, we have
\[ \begin{array}{lll}
r_D^{\, \prime}(x;y - x) & \leq & r_D(y) - r_D(x) \, = \, -r_D(x) \, \leq \, - \gamma \, \mbox{dist}(x;D) \\ [0.1in]
& \leq & - \displaystyle{
\frac{\gamma}{\gamma^{\, \prime}}
} \, \mbox{dist}(x;C \, \cap \, D) \, = \, -\displaystyle{
\frac{\gamma}{\gamma^{\, \prime}}
} \, \| \, x - y \, \|_2
\end{array}
\]
as desired.
\end{proof}

In addition to the necessary and sufficient inequality (\ref{eq:equivalent lmr}), there are well-known sufficient conditions
for two closed convex sets to be linearly metrically regular.  In particular, we provide in Proposition~\ref{pr:sufficient for lmr}
below a Slater condition (part (a)) and a polyhedrality condition (part (b)).
The proof of the first statement of the proposition follows from \cite[Corollary 5, Remark 9 and 10]{bauschke-1999-stron-conic} and
and that of the second statement from the well-known Hoffman error bound for polyhedra \cite[Lemma 3.2.3]{facchinei2007finite}.
In the result, we use the shorthand ``rint'' to denote the relative interior of a convex set.

\begin{proposition} \label{pr:sufficient for lmr} \rm
Two closed convex sets $C$ and $D$ in $\mathbb{R}^n$ are linearly metrically regular under either one of the following two conditions:
\begin{description}
\item[\rm (a)] $C \cap D$ is bounded and $\mbox{rint}(C) \, \cap \, \mbox{rint}(D) \, \neq \, \emptyset$;
\item[\rm (b)] $C$ and $D$ are both polyhedra. \hfill $\Box$
\end{description}
\end{proposition}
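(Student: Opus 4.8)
The two conditions are of quite different character, so the plan is to treat them separately, in each case producing the uniform constant demanded by the definition of linear metric regularity. For both parts it is convenient to work with the equivalent one-sided form \eqref{eq:equivalent lmr}: it suffices to exhibit a constant $\gamma_1 > 0$ with $\dist(x;C\cap D)\le\gamma_1\,\dist(x;D)$ for all $x\in C$, since the equivalence with the max-bound was established in the paragraph preceding the proposition.

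I would dispatch part (b) first, as it is essentially a direct consequence of Hoffman's bound. Write the polyhedra as $C=\{x\mid Ax\le a\}$ and $D=\{x\mid Bx\le b\}$, so that $C\cap D$ is the solution set of the stacked system $Ax\le a,\ Bx\le b$. Applying the Hoffman error bound \cite[Lemma~3.2.3]{facchinei2007finite} to this combined system yields a constant $\eta>0$, depending only on the stacked matrix, with
\[
\dist(x;C\cap D)\ \le\ \eta\,\bigl\|\,(\,(Ax-a)^{+},\,(Bx-b)^{+}\,)\,\bigr\|_{2},\qquad\forall\,x\in\re^{n}.
\]
Now specialize to $x\in C$: there the first residual block vanishes, so the bound collapses to $\dist(x;C\cap D)\le\eta\,\|(Bx-b)^{+}\|_{2}$. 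A short projection estimate controls the remaining block by the distance to $D$: if $A_i^{\mathrm T}x\le a_i$ and $B_i^{\mathrm T}$ denotes the $i$-th row of $B$, then taking $y$ to be the Euclidean projection of $x$ onto $D$ gives $(B_i^{\mathrm T}x-b_i)^{+}\le\|B_i\|_{2}\,\|x-y\|_{2}=\|B_i\|_{2}\,\dist(x;D)$, and summing in quadrature yields $\|(Bx-b)^{+}\|_{2}\le\|B\|_{F}\,\dist(x;D)$. Hence \eqref{eq:equivalent lmr} holds with $\gamma_1=\eta\,\|B\|_{F}$, which is exactly linear metric regularity.

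For part (a) the relative-interior hypothesis is a Slater-type constraint qualification, and the natural route is the strong-CHIP / bounded-linear-regularity machinery of \cite{bauschke-1999-stron-conic}. By the cited corollary and remarks, $\rint(C)\cap\rint(D)\ne\emptyset$ guarantees that $C$ and $D$ are \emph{boundedly} linearly regular; that is, for each bounded set $B$ there is a $\gamma_B$ with $\dist(x;C\cap D)\le\gamma_B\max(\dist(x;C),\dist(x;D))$ for all $x\in B$. The remaining task is to globalize this ball-by-ball estimate to a single constant valid on all of $\re^{n}$, and this is precisely where the boundedness of $C\cap D$ is used. The plan is a normalized recession argument: if no global constant existed, there would be points $x^{k}$ with $\dist(x^{k};C\cap D)>k\,\max(\dist(x^{k};C),\dist(x^{k};D))$; bounded linear regularity forces $\|x^{k}\|\to\infty$, and since $\dist(x^{k};C)$ and $\dist(x^{k};D)$ are then $o(\|x^{k}\|)$, the projections of $x^{k}$ onto $C$ and onto $D$ share the same limiting direction $d=\lim x^{k}/\|x^{k}\|$ with $\|d\|=1$. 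Closedness and convexity make $d$ a common recession direction, so $d\in C^{\infty}\cap D^{\infty}=(C\cap D)^{\infty}=\{0\}$, the last equality being exactly the boundedness of $C\cap D$; this contradicts $\|d\|=1$.

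I expect the globalization in part (a) to be the main obstacle: one must simultaneously track the blow-up of the regularity ratio and the escape to infinity, and verify that the limiting direction is genuinely a common recession direction of the two sets, which relies on the identity $(C\cap D)^{\infty}=C^{\infty}\cap D^{\infty}$ for closed convex sets with nonempty intersection together with the trivial recession cone of the bounded set $C\cap D$. Part (b), and the reduction of part (a) to bounded linear regularity, are by contrast routine appeals to the cited results.
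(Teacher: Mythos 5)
Your proof is correct. The paper offers no proof of this proposition at all --- it merely points to \cite[Corollary~5, Remarks~9 and 10]{bauschke-1999-stron-conic} for part (a) and to the Hoffman bound \cite[Lemma~3.2.3]{facchinei2007finite} for part (b) --- and your argument follows exactly those routes while supplying the details the paper delegates to the citations: the quadrature estimate reducing the Hoffman residual of the stacked system to $\dist(x;D)$ for $x\in C$ in (b), and the recession-cone argument (using $(C\cap D)^{\infty}=C^{\infty}\cap D^{\infty}=\{0\}$) that upgrades bounded linear regularity to a single global constant in (a), both of which check out.
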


Combining part (b) of Theorem~\ref{th:exact penal shared} and Propositions~\ref{pr:linear metric implies error} and \ref{pr:sufficient for lmr},
we have the following exact penalization result for the GNEP $(C,D;\theta)$ in terms of the penalized NEP $(C,\theta + \rho \, r_D)$, for which no proof is needed.

\begin{corollary} \label{co:exact penalization under sufficient} \rm
Let $D$ be a closed convex set in $\mathbb{R}^n$ and $C \triangleq \displaystyle{
\prod_{\nu \in [ N ]}
} \, C^{\, \nu}$ where each $C^{\, \nu}$ is a closed convex subset of $\mathbb{R}^{n_{\nu}}$.
Assume that each $\theta_{\nu}(\bullet,x^{-\nu})$ is convex and locally Lipschitz continuous on the set
$C^{\, \nu}$ with constant $\mbox{Lip}_{\theta} > 0$ for all $x^{-\nu} \in C^{\, -\nu}$.
Let $r_D$ be a convex $C$-residual function of the set $D$ that is differentiable
on $C \setminus D$.  Then under either one of the following two conditions, there exists a scalar $\bar{\rho} > 0$ such that for all
$\rho > \bar{\rho}$,  every equilibrium solution of the NEP \( (C; \theta + \rho \, r_{D}) \) is an equilibrium solution of the GNEP \( (C, D; \theta) \):
\begin{description}
\item[\rm (a)] $r_D$ satisfies a $C$-Lipschitz error bound for the set $D$; $C \cap D$ is bounded;
and $\mbox{rint}(C) \, \cap \, \mbox{rint}(D)$ is nonempty;
\item[\rm (b)] $C$ and $D$ are both polyhedra. \hfill $\Box$
\end{description}
\end{corollary}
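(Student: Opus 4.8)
The plan is to assemble the corollary from the three machines already in hand: Theorem~\ref{th:exact penal shared}(b) reduces exact penalization to verifying the two-sided inequality \eqref{eq:bounded-dd} at every infeasible point $x \in C \setminus D$, Proposition~\ref{pr:linear metric implies error} delivers the right-hand inequality of \eqref{eq:bounded-dd} from linear metric regularity together with an error bound, and Proposition~\ref{pr:sufficient for lmr} supplies linear metric regularity from the structural hypotheses of (a) and (b). So the whole task is to check that, under (a) and under (b) separately, every hypothesis feeding into these results is met, and to chase the constants through.

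First I would dispose of the ambient regularity hypotheses demanded by Theorem~\ref{th:exact penal shared}. Convexity of each $\theta_{\nu}(\bullet,x^{-\nu})$ on the convex set $C^{\,\nu}$ yields directional differentiability in every feasible direction and, together with the assumed local Lipschitz continuity, the constant $\mbox{Lip}_{\theta}$ that the theorem needs; $C$ and $D$ are closed and convex by assumption. The left-hand (subadditivity) inequality of \eqref{eq:bounded-dd} is then free: since $r_{D}$ is differentiable on $C \setminus D$, for such $x$ and any $d = (d^{\nu})_{\nu}$ one has $r_{D}^{\,\prime}(x;d) = \nabla r_{D}(x)^{\mathrm{T}} d = \sum_{\nu} \nabla_{x^{\nu}} r_{D}(x)^{\mathrm{T}} d^{\nu} = \sum_{\nu} r_{D}(\bullet,x^{-\nu})^{\,\prime}(x^{\nu};d^{\nu})$, so the inequality holds with equality. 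Thus the only substantive point is the right-hand inequality $r_{D}^{\,\prime}(x;d) \le -\alpha\|d\|_{2}$ for some nonzero $d \in \mathcal{T}(x;C)$, uniformly in $x \in C \setminus D$.

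Next I would establish linear metric regularity of the pair $(C,D)$ and then invoke Proposition~\ref{pr:linear metric implies error}. Under (a), Proposition~\ref{pr:sufficient for lmr}(a) applies because $C \cap D$ is bounded and $\rint(C) \cap \rint(D) \ne \emptyset$ (which also guarantees $C \cap D \ne \emptyset$); under (b), Proposition~\ref{pr:sufficient for lmr}(b) applies since $C$ and $D$ are polyhedra. Either way we obtain a constant $\gamma^{\,\prime}>0$ as in \eqref{eq:equivalent lmr}. Feeding this, together with the convexity of $r_{D}$, its differentiability on $C \setminus D$, and the $C$-Lipschitz error bound with constant $\gamma$, into Proposition~\ref{pr:linear metric implies error} produces the right-hand inequality of \eqref{eq:bounded-dd} with the uniform constant $\alpha = 1/(\gamma\gamma^{\,\prime})$; the descent direction is $d = y - x$ with $y$ the projection of $x$ onto $C \cap D$, which is nonzero (as $x \notin D$) and lies in $\mathcal{T}(x;C)$ by convexity of $C$. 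With \eqref{eq:bounded-dd} verified uniformly, Theorem~\ref{th:exact penal shared}(b) yields the claimed $\bar{\rho}$ and finishes both cases.

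The one place I expect friction is the $C$-Lipschitz error bound required by Proposition~\ref{pr:linear metric implies error}. In case (a) it is listed as an explicit hypothesis, so there is nothing to do. In case (b), however, the corollary posits only polyhedrality and does not separately assume the error bound; to be careful I would note that for polyhedral $D$ the natural residual functions---in particular the convex $\ell_{q}$ residual of the defining linear system, which is exactly the differentiable choice used throughout---satisfy a $C$-Lipschitz error bound via the Hoffman bound cited in Proposition~\ref{pr:sufficient for lmr}, so the hypothesis is automatic for any admissible $r_{D}$ of that form. Secondarily, I would ensure $C \cap D \ne \emptyset$ is in force (implicit in the standing feasibility assumption), since both the definition of linear metric regularity and the projection step in Proposition~\ref{pr:linear metric implies error} require it.
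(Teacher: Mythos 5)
Your proposal is correct and follows exactly the route the paper intends: the corollary is stated there with no proof precisely because it is the combination of Theorem~\ref{th:exact penal shared}(b), Proposition~\ref{pr:linear metric implies error}, and Proposition~\ref{pr:sufficient for lmr} that you assemble, with the left inequality of \eqref{eq:bounded-dd} free from differentiability of $r_D$ on $C\setminus D$. Your observation about case (b) is also the paper's intended reading -- the text following the corollary confirms that for polyhedral data the $\ell_q$ residual's Hoffman-type error bound makes the explicit error-bound hypothesis of case (a) unnecessary.
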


Part (b) of Corollary~\ref{co:exact penalization under sufficient} provides a formal generalization of part (c) of the illustrative Example~\ref{eg:3-penalty-funs}. It also demonstrates that, for shared linearly constrained games and with the use of
\( \ell_{q} \) residual functions for \( q\in (1, \infty) \), the Lipschitz error bound condition required in part (a) is no longer needed. In the next subsection, we will show that this is true in general for convex finitely representable set \( D \).

\subsection{Shared finitely representable sets} \label{subsec:shared finitely representable}

In this subsection and the rest of the paper, we assume that the set $C$ is compact and convex.
We discuss the GNEP $(C,D;\theta)$ where the shared-constrained set $D$ is finitely representable by coupled differentiable
inequalities and linear equations.  Specifically, let
\begin{equation} \label{eq:shared finitely representable X}
D \, \triangleq \, \left\{ \, x \, \in \, \mathbb{R}^n \, \mid \,
g(x) \, \leq \, 0 \mbox{ and } h(x) \, = \, 0 \, \right\}
\end{equation}
with $h(x) = Ax - b$ for some matrix $A \in \mathbb{R}^p$ and vector $b \in \mathbb{R}^p$ and each
$g_j : \mathbb{R}^n \to \mathbb{R}$ for $j \in [m]$ being convex and differentiable.
We separate the linear and nonlinear constraints in $D$ so that we can state the
Slater condition more precisely; see assumption in Lemma~\ref{lm:shared Slater}.
We employ the $\ell_q$ residual function for an arbitrary $q \in (1,\infty)$ for the set $D$:
\[
r_q(x) \, \triangleq \, \left\| \, \left( \, \begin{array}{c}
\max( \, g(x), \, 0 \, ) \\ [3pt]
h(x)
\end{array} \right) \, \right\|_q, \epc x \, \in \, \mathbb{R}^n.
\]
The function $r_q$ is continuously differentiable on the complement of the set $D$.
Toward the formulation of an exact penaliztion result for the shared constrained GNEP $(C,D;\theta)$ with
a finitely representable set $D$, we establish
a consequence of the Slater CQ for the set $D$. This lemma is related to many results in the literature; but surprisingly, we cannot
find a source that has this exact result.
%the complication is the presence of the set $C$ which is not required to be polyhedral.  
As such, we give a detailed proof.

\begin{lemma} \label{lm:shared Slater} \rm
In the above setting, let $C$ be a compact convex set in $\mathbb{R}^n$.  Suppose that
there exists a vector $\bar{x} \in \rint{C}$
such that $A\bar{x} = b$ and $g(\bar{x}) < 0$.
Then there exists a scalar $ \alpha > 0$ such that for every $x \in C \setminus D$, a vector $\wh{x} \in C$ exists
such that $r_q^{\, \prime}(x;\wh{x} - x) \, \leq \, - \alpha \, \| \, \wh{x} - x \, \|_2$.
\end{lemma}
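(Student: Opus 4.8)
The plan is to reduce the strong‑descent inequality to a Lipschitz error bound and to produce the descent vector $\wh{x}$ by projecting onto the feasible set. Since $r_q$ is continuously differentiable on the complement of $D$, for any $x\in C\setminus D$ and any $\wh{x}\in D\cap C$ the convexity of $r_q$ gives
\[
r_q^{\,\prime}(x;\wh{x}-x)\;=\;\nabla r_q(x)^{\mathrm{T}}(\wh{x}-x)\;\le\;r_q(\wh{x})-r_q(x)\;=\;-\,r_q(x),
\]
because $r_q$ vanishes on $D\cap C$. Noting that $D\cap C$ is nonempty (it contains $\bar{x}$), closed and convex, I would take $\wh{x}$ to be the Euclidean projection of $x$ onto $D\cap C$, so that $\|\wh{x}-x\|_2=\dist(x;D\cap C)>0$. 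The lemma then follows with $\alpha=1/\kappa$ the moment one has a $C$‑Lipschitz error bound $\dist(x;D\cap C)\le\kappa\,r_q(x)$ for all $x\in C$. Thus the whole statement reduces to establishing this error bound; observe that the projection step is precisely why one does \emph{not} simply move toward $\bar{x}$, since travelling straight to $\bar{x}$ need not decrease $r_q$ at a rate proportional to $\|\bar{x}-x\|_2$.

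To obtain the error bound I would first peel off the set $C$ by linear metric regularity. The Slater hypothesis places $\bar{x}$ in $\rint(C)\cap\rint(D)$, and $C\cap D$ is bounded because $C$ is compact; hence Proposition~\ref{pr:sufficient for lmr}(a) applies and $C,D$ are linearly metrically regular. By the equivalence \eqref{eq:equivalent lmr} there is a constant $\gamma_1>0$ with $\dist(x;C\cap D)\le\gamma_1\,\dist(x;D)$ for every $x\in C$. Consequently it suffices to prove the pure error bound $\dist(x;D)\le\kappa^{\,\prime}\,r_q(x)$ on $C$, after which $\kappa=\gamma_1\kappa^{\,\prime}$. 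For this lower bound I may restrict the equalities to a maximal linearly independent subsystem: it defines the same $D$, and its residual is dominated by $r_q$, so I may assume the rows of $A$ are linearly independent.

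The analytic engine is the Slater inequality obtained by differentiating $r_q$ toward $\bar{x}$. Using the gradient formula \eqref{eq:lk-gradient}, the convexity of each $g_i$ (so that $\nabla g_i(x)^{\mathrm{T}}(\bar{x}-x)\le g_i(\bar{x})-g_i(x)$), the identity $\nabla h_j(x)^{\mathrm{T}}(\bar{x}-x)=-h_j(x)$ coming from $A\bar{x}=b$, and the strict feasibility $g_i(\bar{x})\le-\delta<0$ with $\delta\triangleq-\max_i g_i(\bar{x})$, a short computation yields, for all $x\in C\setminus D$,
\[
\nabla r_q(x)^{\mathrm{T}}(\bar{x}-x)\;\le\;-\,\frac{\delta\,\sum_{i\in[m]}\big(g_i^{+}(x)\big)^{q-1}}{r_q^{\,q-1}(x)}\;-\;r_q(x)\;\le\;-\,r_q(x),
\]
which in particular exhibits $\bar{x}-x^{*}$ as a Mangasarian–Fromovitz direction at every $x^{*}\in\partial D\cap C$: the active $\nabla g_i(x^{*})$ are strictly decreased while the $\nabla h_j(x^{*})$ are annihilated and (after the reduction above) linearly independent. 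I would then prove the error bound for $D$ by compactness and contradiction: if it failed there would be $x^{k}\in C\setminus D$ with $r_q(x^{k})/\dist(x^{k};D)\to 0$, and since $\dist(\cdot;D)$ is bounded on the compact $C$ a subsequence converges to some $x^{*}\in D\cap C$; writing $u^{k}\triangleq(x^{k}-P_{D}(x^{k}))/\dist(x^{k};D)$, convexity gives $r_q(x^{k})/\dist(x^{k};D)\ge r_q^{\,\prime}(P_D(x^k);u^{k})$, and passing to the limit the MFCQ property above forces the limiting active violation weights to vanish, contradicting $\|u^{*}\|_2=1$. This local‑to‑global error bound under the Slater condition is the step I expect to be the main obstacle — it is where $q\in(1,\infty)$ genuinely matters, since it makes $r_q$ differentiable off $D$ and legitimizes both the gradient inequality and the projection argument; the convexity and linear‑metric‑regularity steps surrounding it are routine.
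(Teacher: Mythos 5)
Your argument is correct in outline but takes a genuinely different route from the paper's. The paper never invokes an error bound here: it constructs the descent vector $\wh{x}$ directly from the Slater point, choosing $\wh{x}$ inside a small relative ball around $\bar{x}$ and on the side of each violated hyperplane $h_j=0$ \emph{opposite} to $\sgn h_j(x)$, so that every violated constraint satisfies $\nabla g_i(x)^{\mathrm{T}}(\wh{x}-x)\le-\eta$ and $(\sgn h_j(x))\nabla h_j(x)^{\mathrm{T}}(\wh{x}-x)\le-\eta$ for a uniform $\eta>0$. Plugged into \eqref{eq:lk-gradient}, this gives $\nabla r_q(x)^{\mathrm{T}}(\wh{x}-x)\le-\eta\,(r_{q-1}(x)/r_q(x))^{q-1}\le-\eta\,c_q^{q-1}$, a bound \emph{independent of how small $r_q(x)$ is}, which is then converted to $-\alpha\|\wh{x}-x\|_2$ by dividing by the diameter of $C$. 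You instead project onto $C\cap D$, get $r_q^{\,\prime}(x;\wh{x}-x)\le-r_q(x)$ by convexity, and convert via a $C$-Lipschitz error bound plus linear metric regularity --- i.e., you re-route the lemma through Proposition~\ref{pr:linear metric implies error} and Proposition~\ref{pr:sufficient for lmr}(a). That is a legitimate alternative, but note it is precisely the machinery the authors are trying to dispense with: the remark after Corollary~\ref{co:exact penalization under sufficient} advertises that for finitely representable $D$ with $\ell_q$ residuals the error-bound hypothesis is \emph{not needed}, and this lemma is the paper's way of delivering on that claim. What the paper's construction buys is a self-contained proof; what yours buys is modularity, at the price of having to prove the error bound $\dist(x;D)\le\kappa^{\,\prime}r_q(x)$ on $C$ from scratch.

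That error bound is the one real soft spot. The statement is true, but your compactness-and-contradiction sketch is not airtight as written: the inequality $r_q(x^k)/\dist(x^k;D)\ge r_q^{\,\prime}(P_D(x^k);u^k)$ is fine, but passing to the limit requires lower semicontinuity of $(y,d)\mapsto r_q^{\,\prime}(y;d)$ at boundary points of $D$, which convex directional derivatives do not supply (they are upper, not lower, semicontinuous in the base point), and ``the limiting active violation weights vanish'' is not yet an argument. A cleaner route to the same bound: first project onto the affine set $L=\{x:Ax=b\}$ and control $\|x-P_L(x)\|$ by Hoffman's bound in terms of $\|Ax-b\|$; then, working inside $L$, move from $P_L(x)$ toward the Slater point $\bar{x}$ (which lies in $L$) and use $g_i((1-t)y+t\bar{x})\le(1-t)g_i^{+}(y)-t\delta$ with $\delta=-\max_i g_i(\bar{x})$ to reach $D$ after a step of length $O(\|g^{+}(y)\|_\infty)$; local Lipschitz continuity of $g$ on a neighborhood of the compact set $C$ ties $\|g^{+}(P_L(x))\|$ back to $r_q(x)$. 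With that substitution your proof closes; without it, the central step remains a citation to a fact you have not established.
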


\begin{proof}  % Let $\mbox{aff}(C)$ denote the affine hull of $C$.
Let $\varepsilon > 0$ be such that the closed set
$\mathbb{B}_{\varepsilon}(\bar{x}) \, \triangleq  \, \left\{ \, x \, \in \, C \, \mid \, \| \, x - \bar{x} \, \|_2 \, \leq \, \varepsilon \, \right\}$
is contained in $\left\{ x \in C \, \mid \, g(x) < 0 \, \right\}$.  That such a scalar $\varepsilon$ exists is by the property of
the Slater point $\bar{x}$.  Let $\Sigma$ be the family of tuples $\sigma \in \{ 0, 1, -1 \}^p$ such that there exists $x \in C$
satisfying $\sgn(h(x)) = \sigma$, where $\sgn(h(x))$ is the $p$-vector whose $i$th component is equal to the sign of $h_i(x)$ if
$h_i(x) \neq 0$ and equal to 0 otherwise.  For each $\sigma \in \Sigma$, let
\[
C(\sigma) \triangleq \left\{ \, x \, \in \mathbb{B}_{\varepsilon}(\bar{x}) \, \left| \, \begin{array}{ll}
h_i(x) \, = \, 0 & \mbox{if $\sigma_i \, = \, 0$} \\ [5pt]
h_i(x) \, > \, 0 & \mbox{if $\sigma_i \, = \, 1$} \\ [5pt]
h_i(x) \, < \, 0 & \mbox{if $\sigma_i \, = \, 0$}
\end{array} \right. \right\}.
\]
This set $C(\sigma)$ must be nonempty because for any $\sigma \in \Sigma$, letting $x^{\, \sigma}$ be a vector in $C$ such that
$\sgn(h(x^{\, \sigma})) = \sigma$, by the linearlity of $h$ and the fact that $h(\bar{x}) = 0$, we have
\[
\sgn(h(\bar{x} + \tau ( x^{\, \sigma} - \bar{x} ))) = \, \sgn(h(x^{\, \sigma})) \, = \, \sigma,
\]
thus $\bar{x} + \tau ( x^{\, \sigma} - \bar{x} ) \in C(\sigma)$ for all $\tau \in (0,1]$ sufficiently small. Furthermore, since \( C \) is convex and \( \bar{x} \in \rint{C} \), for some \( \tau < 0 \) that is sufficiently close to \( 0 \), \( \bar{x} + \tau ( x^{\, \sigma} - \bar{x} ) \in C \) and \( \sgn(h(\bar{x} + \tau ( x^{\, \sigma} - \bar{x} ))) = - \sigma \). Therefore, \( \Sigma = -\Sigma \) and for every \( x\in C \), \( -\sgn h(x) \in \Sigma \). For each $\sigma \in \Sigma$, define
\[
\mu(\sigma) \, \triangleq \, \left\{ \begin{array}{ll}
1 & \mbox{if $\sigma = 0$} \\ [5pt]
\displaystyle{
\max_{x \in \cl C(\sigma)}
} \left[ \, \displaystyle{
\min_{i \, : \, \sigma_i \neq 0}
} \, \sigma_i \, h_i(x)\, \right] & \mbox{if $\sigma \neq 0$}.
\end{array} \right.
\]
where $\cl$ denotes the closure of a set.  Since $\displaystyle{
\min_{i \, : \, \sigma_i \neq 0}
} \, \sigma_i \, h_i(x)$ is a continuous function of $x$ and $\cl C(\sigma)$ is compact, it follows that maximum in $\mu(\sigma)$ is attained
if $\sigma \neq 0$.  Hence $\mu(\sigma)$ is a positive scalar for all $\sigma \in \Sigma$.  Let
\[
y(\sigma) \, \left\{ \begin{array}{lll}
\triangleq & \bar{x} & \mbox{if $\sigma = 0$} \\ [5pt]
\in & \displaystyle{
\operatornamewithlimits{\mbox{argmax}}_{x \in \cl C(\sigma)}
} \, \left[ \, \displaystyle{
\min_{i \, : \, \sigma_i \neq 0}
} \, \sigma_i \, h_i(x)\, \right] & \mbox{if $\sigma \neq 0$}.
\end{array} \right.
\]
Note that $g(y(\sigma)) < 0$ for all $\sigma \in \Sigma$.  Now define the scalar
\[
\eta \, \triangleq \, \displaystyle{
\min_{\sigma \in \Sigma}
} \, \left\{ \, \mu(\sigma), \, \displaystyle{
\min_{i \in [m]}
} \, \left[ \, -g_i(y(\sigma)) \, \right] \, \right\},
\]
which must be positive.
For $x \in C \setminus D$, $\sgn(h(x))$ is a nonzero sign tuple in the family $\Xi$.  
Let $\wh{x} \triangleq y(-\sgn(h(x)))$.  By the gradient inequality of convex functions 
and the definition of \( \eta \), we have the following:
\[ \begin{array}{clllll}
\nabla g_{i}(x)^{\mathrm{T}} ( \wh{x}-x ) & \leq & g_{i}( \wh{x} ) \, = \, g_{i}(y(- \sgn h(x))) & \leq & -\eta & \text{if } g_{i}(x) > 0 \\ [5pt]
(\,\sgn{h_{j}(x)}\,) \nabla h_{j}(x)^T ( \wh{x} - x) & \leq & (\,\sgn{h_{j}(x)}\,) \, h_{j}({y}(-\sgn h(x))) & \leq & -\eta &\text{if } h_{j}(x) \neq 0.
\end{array} \]
By the gradient formula (\ref{eq:lk-gradient}), we deduce,
\[ \begin{array}{lll}
\nabla r_q(x)^T( \, \wh{x} - x \, ) & = & \displaystyle{
\frac{1}{( \, r_q(x) \, )^{q-1}}
} \, \left[ \begin{array}{l}
\displaystyle{
\sum_{i \in [ m ]}
} \left[ \, \max( \, g_i(x), \, 0 \, ) \, \right]^{q-1} \, \nabla_{x^{\nu}} g_i(x) + \\ [0.25in]
\displaystyle{
\sum_{j \in [ p ]}
} \, | \, h_j(x) \, |^{q-1} \, ( \, \sgn h_{j}(x) \, ) \, \nabla_{x^{\nu}} h_{j}(x)
\end{array} \right]^T( \, \wh{x} - x \, ) \\ [0.45in]
& \leq & -\eta \, \left( \, \displaystyle{
\frac{r_{q-1}(x)}{r_q(x)}
} \, \right)^{q-1} \, \leq \, -\eta \, c_q^{q-1} \, \leq \, - \alpha \, \| \, \wh{x} - x \|_2,
\end{array} \]
where $c_q > 0$ is the constant such that $ c_q \, \| \bullet \|_q \leq \| \bullet \|_{q-1}$ and
$ \alpha \triangleq \displaystyle{
\frac{\eta \, c_q^{q-1}}{2 \, \displaystyle{
\max_{x \in C}
} \, \| \, x \, \|_2}
}$.
\end{proof}

The following exact penalization result for the shared-constrained GNEP is an immediate consequence of the above lemma
and part (b) of Theorem~\ref{th:exact penal shared}.  There is no need for a proof.

\begin{theorem} \label{th:exact penal shared finitely representable} \rm
Let each $C^{\, \nu}$ be a compact convex subset of $\mathbb{R}^{n_{\nu}}$ and
$D$ be given by (\ref{eq:shared finitely representable X}) with $h(x) = Ax - b$ being affine and
each $g_j(\bullet)$ being convex and differentiable for all $x^{-\nu} \in C^{\, -\nu}$.
Assume that each $\theta_{\nu}(\bullet,x^{-\nu})$ is convex and Lipschitz continuous on the set
$C^{\, \nu}$ with constant $\mbox{Lip}_{\theta} > 0$ for all $x^{-\nu} \in C^{\, -\nu}$.
Let $q \in ( 1, \infty )$ be given.  The following two statements hold.

\gap

(a) For every $\rho > 0$, the NEP $(C; \theta + \rho \, r_{q})$ has an equilibrium solution.

\gap

(b) Under the assumption of Lemma~\ref{lm:shared Slater}, $\bar{\rho} > 0$ exists such that for every $\rho > \bar{\rho}$, every equilibrium
solution of the NEP $(C;\theta + \rho \, r_q)$ is a equilibrium solution of the GNEP $(C,D;\theta)$.  \hfill $\Box$
\end{theorem}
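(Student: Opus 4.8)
The claimed result decomposes into an existence statement (a) and an exact-penalization statement (b). As the surrounding text indicates, statement (b) is a direct assembly of Lemma~\ref{lm:shared Slater} with part (b) of Theorem~\ref{th:exact penal shared}, so the only genuine work there is to line up the hypotheses; statement (a) is a self-contained existence argument that I would handle by a standard fixed-point theorem.

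For part (a), the plan is to invoke the classical Nash-equilibrium existence theorem (provable via Kakutani's fixed-point theorem applied to the best-response correspondence). I would verify its hypotheses for the game $(C;\theta + \rho\, r_q)$: first, each strategy set $C^{\, \nu}$ is nonempty, compact, and convex, which is assumed; second, each penalized cost $\theta_\nu(\bullet,x^{-\nu}) + \rho\, r_q(\bullet,x^{-\nu})$ is convex in the player's own variable, because $\theta_\nu(\bullet,x^{-\nu})$ is convex by hypothesis and $r_q$ is convex (being the $\ell_q$-norm of the convex vector $(g^{+},h)$ assembled from the convex $g_i$ and the affine $h_j$), its restriction to the slice fixing $x^{-\nu}$ inheriting convexity; third, each penalized cost is jointly continuous in $x$, which for the penalty term $r_q$ is immediate from continuity of $g$ and $h$. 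The best-response correspondence then has nonempty, convex, compact values and, by Berge's maximum theorem, a closed graph, so Kakutani's theorem yields a fixed point, i.e.\ an equilibrium, for every $\rho>0$.

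For part (b), the plan is to check that condition (b) of Theorem~\ref{th:exact penal shared} holds for the residual $r_q$ and the sets $C,D$. Lemma~\ref{lm:shared Slater} supplies precisely the right-hand inequality: for each $x\in C\setminus D$ there is $\wh{x}\in C$ with $r_q^{\,\prime}(x;\wh{x}-x)\le -\alpha\,\|\wh{x}-x\|_2$. Setting $d \triangleq \wh{x}-x$ and using that $C$ is convex, so every chord direction lies in the tangent cone, yields a nonzero $d\in\mathcal{T}(x;C)$ with $r_q^{\,\prime}(x;d)\le -\alpha\,\|d\|_2$. The left-hand inequality in \eqref{eq:bounded-dd} then holds as an equality, since $r_q$ is continuously differentiable on $C\setminus D$: the directional derivative is linear in $d$ and separates across the player blocks, $r_q^{\,\prime}(x;d)=\nabla r_q(x)^{\mathrm{T}}d=\sum_{\nu\in[N]}\nabla_{x^\nu} r_q(x)^{\mathrm{T}}d^{\,\nu}=\sum_{\nu\in[N]} r_q(\bullet,x^{-\nu})^{\,\prime}(x^\nu;d^{\,\nu})$. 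With each $\theta_\nu(\bullet,x^{-\nu})$ convex, hence directionally differentiable, and Lipschitz on $C^{\,\nu}$, every hypothesis of Theorem~\ref{th:exact penal shared} is met, and it delivers the threshold $\bar\rho$ with the stated property.

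The main obstacle lies in part (a) rather than (b): the stated hypotheses only guarantee Lipschitz continuity of $\theta_\nu(\bullet,x^{-\nu})$ in the own variable, whereas the fixed-point argument requires \emph{joint} continuity in $x$ so that the best-response map has closed graph. I would lean on the standing continuity that is implicit throughout the paper, noting that the penalty term is itself jointly continuous and convex and so adds no difficulty, or else promote joint continuity of $\theta$ to an explicit standing assumption. Part (b) is otherwise routine; the only points needing care are the membership $\wh{x}-x\in\mathcal{T}(x;C)$, which is exactly where convexity of $C$ enters, and the observation that differentiability, not merely directional differentiability, of $r_q$ off $D$ is what turns the block separation of the directional derivative into an equality.
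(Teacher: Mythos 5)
Your proposal is correct and follows essentially the same route as the paper, which presents this theorem as an immediate consequence of Lemma~\ref{lm:shared Slater} combined with part (b) of Theorem~\ref{th:exact penal shared} (for statement (b)) and a standard Nash-existence argument enabled by compactness and convexity of the $C^{\,\nu}$ (for statement (a)), offering no further proof. Your added checks --- that $\wh{x}-x\in\mathcal{T}(x;C)$ by convexity and that differentiability of $r_q$ off $D$ gives the left-hand inequality of \eqref{eq:bounded-dd} --- are exactly the details the paper leaves implicit.
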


We note that the property of boundedness and convexity of the sets $C^{\, \nu}$ is used to assert the existence of a solution to the penalized NEP for any $\rho > 0$.   Without such existence, the result is not meaningful, unless we have
a converse in place to argue that the GNEP will not have a solution in this case.  This is one aspect that we have de-emphasized in the discussion.
%\qblue{Another remark pertains to the case when $C$ is non-convex; in this case, the condition in the Slater condition could be modified by replacing the vector $\wh{x} - x$ by a tangent vector $\wh{d} \in {\cal T}(x;C)$.  Details of this refinement are omitted.}

\gap

As a final remark of this section, we note that by following similar arguments,
we can establish an exact $\ell_1$-penalty result for the variational equilibria of the shared-constrained GNEP $(C,D;\theta)$
with a convex, finitely representable set $D$ under the standard Slater condition.  The proof would use the directional derivative
formula (\ref{eq:l1-dd}) for this penalty function instead of the gradient formula (\ref{eq:lk-gradient}).  The details are omitted.
Such a result provides a generalization of part (d) of the illustrative Example~\ref{eg:3-penalty-funs}.

\section{Non-Shared Finitely Representable Sets} \label{sec:finitely representable}

In spite of the considerable amount of literature on the exact penalization of a GNEP with coupled finitely representable constraints,
most existing results pertain to problems with only inequality constraints, see
e.g., \cite{facchinei-2010-penal-method, facchinei-2011-Lampariello, fukushima2011restricted, KanzowSteck16}.  The reason for
the absence of such an extended treatment to allow for equality constraints (even linear ones) is not exactly clear; one possible 
reason might be that the EMFCQ used in the aforementioned papers is not easily amenable to generalization to handle equality constraints. 
In this section, we extend the discussion in Subsection~\ref{subsec:shared finitely representable} to the
GNEP $(C,X;\theta)$ with non-shared coupled finitely representable constraints and show that an extended 
Slater-type CQ can be imposed to yield the exact penalization for such a GNEP.   
Specifically, we let the sets $X^{\nu}(x^{-\nu})$ be defined as follows: for each $x^{-\nu} \in C^{\, -\nu}$, 
\begin{equation} \label{eq:finitely representable X}
X^{\, \nu}(x^{-\nu}) \, \triangleq \,
\left\{ \, x^{\nu} \, \in \, \mathbb{R}^{n_{\nu}} \, \mid \,
g^{\nu}(x^{\nu},x^{-\nu}) \, \leq \, 0 \mbox{ and } h^{\nu}(x) \, = \, 0 \, \right\}
\end{equation}
with each $h^{\nu} : \mathbb{R}^n \to \mathbb{R}^{p_{\nu}}$ being an affine function and each
$g_j^{\nu} : \mathbb{R}^n \to \mathbb{R}$ for $j \in [m_{\nu}]$ being such that
$g_j^{\nu}(\bullet,x^{-\nu})$ is convex and differentiable for every $x^{-\nu} \in C^{\, -\nu}$.
% In principle, we can easily state a result employing an assumption similar to those
% in parts (a) and (b) of Theorem~\ref{th:exact penal shared} for the GNEP $(C,X,\theta)$; namely, by
% imposing a strong descent condition on an abstract $C^{\, \nu}$-residual function of the set $X^{\, \nu}(x^{-\nu})$
% at either the player or aggregate level.  Nevertheless such a result would not add much to the theory developed so far.
% Instead, 
We present a result that is a parallel of Theorem~\ref{th:exact penal shared finitely representable} for GNEP \( (C, X; \theta) \) and then we connect the assumptions used in the result with the commonly used EMFCQ. 
% that is frequently used in the
% study of the QNEP with coupled finitely representable constraints; see e.g.\
% \cite{fukushima2011restricted,pang-2009-quasi-variat,facchinei-2010-penal-method}
% and in the review~\cite[Part II, Section~4.3]{CominettiFacchineiLasserre12}.  
Let
\begin{equation} \label{eq:residual finitely representable}
r_{\nu;q}(x) \, \triangleq \, \left\| \, \left( \, \begin{array}{c} % E^{\, \nu} x^{\nu} - e^{\nu}, \ \max( \, g^{\nu}(x^{\nu}), \, 0 \, ) \
\max( \, g^{\nu}(x), \, 0 \, ) \\ [3pt]
h^{\nu}(x)
\end{array} \right) \, \right\|_q, \epc x \, \in \, \mathbb{R}^n
\end{equation}
be the $\ell_q$ residual function for the set $X^{\nu}(x^{-\nu})$. 
This function $r_{\nu;q}(\bullet,x^{-\nu})$ is continuously differentiable on the complement of the set $X^{\nu}(x^{-\nu})$ for
all $x^{-\nu}$.  Let $r_{X;q} \triangleq \left( r_{\nu;q} \right)_{\nu \in [ N ]}$.
The penalized Nash game, NEP $(C; \theta + \rho \, r_{X;q})$, consists of the following family of optimization problems:
\[
\left\{ \, \displaystyle{
\operatornamewithlimits{\mbox{minimize}}_{x^{\nu} \, \in \, C^{\, \nu}}
} \ \theta_{\nu}(x^{\nu},x^{-\nu}) + \rho \, r_{\nu;q}(x^{\nu},x^{-\nu}) \, \right\}_{\nu=1}^N,
\]
for a given penalty parameter $\rho > 0$.

\gap

\begin{theorem} \label{th:finitely representable} \rm
Let each $C^{\, \nu}$ be a compact convex subset of $\mathbb{R}^{n_{\nu}}$.
Let each \( X^{\nu} \) be defined as in \eqref{eq:finitely representable X} with \( h^{\nu} \) being affine and each $g_j^{\nu}(\bullet,x^{-\nu})$ be convex and differentiable for all $x^{-\nu} \in C^{\, -\nu}$.
Assume that each $\theta_{\nu}(\bullet,x^{-\nu})$ is convex and Lipschitz continuous on the set
$C^{\, \nu}$ with constant $\mbox{Lip}_{\theta} > 0$ for all $x^{-\nu} \in C^{\, -\nu}$.
Let $q \in ( 1, \infty )$ be given.  The following three statements hold.

\gap

(a) For every $\rho > 0$, the NEP $(C; \theta + \rho \, r_{X;q})$ has an equilibrium solution.

\gap

%(b) If $C^{\nu} \, \cap \, X^{\nu}(x^{-\nu}) \neq \emptyset$ for every $x^{-\nu} \in C^{\, -\nu}$.
%and each residual function $r_{\nu;q}(\bullet,x^{-\nu})$ satisfies a $C^{\, \nu}$-Lipschitz error bound
%for the set $X^{\nu}(x^{-\nu})$ with constant $\gamma_{\nu} > 0$ for all $x^{-\nu} \in C^{\, -\nu}$,
%then GNEP $(C,X;\theta)$ has a solution if and only if
%there exists $\bar{\rho} > 0$ such that for every $\rho > \bar{\rho}$, every equilibrium
%solution of the NEP $(C; \theta + \rho \, r_{X;q})$ is a equilibrium solution of the GNEP.
%
%\gap

(b) If there exists $\alpha > 0$ such that
for every $x \in C$ not in the graph of the multifunction $X$, there exists $\wh{x} \in C$ satisfying
for some $\wh{\nu} \in [ N ]$ with $x^{\wh{\nu}} \not\in X^{\wh{\nu}}(x^{-\wh{\nu}})$,

\gap

$\bullet $  for all $i \in [m_{\nu}]$,
$\nabla_{x^{\wh{\nu}}} g^{\wh{\nu}}_i(x^{\, \wh{\nu}},x^{-\wh{\nu}})^T( \, \wh{x}^{\, \wh{\nu}} - x^{\wh{\nu}} \, ) \, \leq -\alpha$
if $g_i^{\wh{\nu}}(x) > 0$, and

\gap

$\bullet $  for all $j \in [p_{\nu} ]$,
$\left[ \, \sgn h_j^{\wh{\nu}}(x) \, \right] \nabla_{x^{\wh{\nu}}} h^{\wh{\nu}}_j(x^{\, \wh{\nu}},x^{-\wh{\nu}})^T
( \, \wh{x}^{\, \wh{\nu}} - x^{\wh{\nu}} \, ) \, \leq -\alpha$
if $h_{j}^{\wh{\nu}}(x) \neq 0$,

\gap

then $\bar{\rho} > 0$ exists such that for every $\rho > \bar{\rho}$, every equilibrium
solution of the NEP $(C; \theta + \rho \, r_{X;q})$ is a equilibrium solution of the GNEP $(C,X;\theta)$.
\end{theorem}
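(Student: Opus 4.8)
The plan is to handle the two asserted statements separately, treating (a) as a routine existence result and concentrating the real work on (b). For part (a), I would fix $\rho>0$ and note that for each player $\nu$ and each $x^{-\nu}\in C^{\,-\nu}$ the penalized objective $\theta_{\nu}(\bullet,x^{-\nu})+\rho\,r_{\nu;q}(\bullet,x^{-\nu})$ is continuous on $C^{\,\nu}$ (since $\theta_{\nu}$ is Lipschitz and $r_{\nu;q}$ is an $\ell_q$-norm of continuous functions) and convex in $x^{\nu}$ (since $\theta_{\nu}(\bullet,x^{-\nu})$ is convex and, by the earlier-noted convexity of $\ell_q$ residual functions built from convex $g^{\nu}_i$ and affine $h^{\nu}$, so is $r_{\nu;q}(\bullet,x^{-\nu})$). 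With each $C^{\,\nu}$ compact and convex, existence of an equilibrium of the NEP $(C;\theta+\rho\,r_{X;q})$ then follows from the standard fixed-point existence theorem for games with compact convex strategy sets and continuous, player-convex costs.

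For part (b), the key reduction is part (a) of Proposition~\ref{pr:NEP feasible yields GNEP}: because $C$ is closed and the graph of $X$ is closed (the defining $g^{\nu},h^{\nu}$ being continuous), and because each $r_{\nu;q}(\bullet,x^{-\nu})$ is a $C^{\,\nu}$-residual function of $X^{\nu}(x^{-\nu})$, it suffices to show that every equilibrium $\bar{x}$ of the penalized NEP lies in the graph of $X$. I would argue by contradiction: assume $\bar{x}\in C$ is \emph{not} in the graph, and use the hypothesis to produce $\wh{x}\in C$ and an index $\wh{\nu}$ with $\bar{x}^{\wh{\nu}}\notin X^{\wh{\nu}}(\bar{x}^{-\wh{\nu}})$ satisfying the two families of componentwise gradient inequalities. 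The heart of the argument is then to convert these bounds into a single strong-descent inequality for $r_{\wh{\nu};q}(\bullet,\bar{x}^{-\wh{\nu}})$ at $\bar{x}^{\wh{\nu}}$, mirroring the final display in the proof of Lemma~\ref{lm:shared Slater}. Since $\bar{x}^{\wh{\nu}}\notin X^{\wh{\nu}}(\bar{x}^{-\wh{\nu}})$, the residual is differentiable there, so applying the gradient formula \eqref{eq:lk-gradient} and substituting the assumed bounds ($\leq-\alpha$) on every violated inequality index and every active equality index, the weighted sum of coefficients collapses to the $(q-1)$-th power of the $\ell_{q-1}$-norm of the residual vector, yielding
\[
\nabla_{x^{\wh{\nu}}} r_{\wh{\nu};q}(\bar{x})^{\mathrm{T}}( \, \wh{x}^{\wh{\nu}} - \bar{x}^{\wh{\nu}} \, ) \; \leq \; -\alpha \, c_q^{\,q-1},
\]
where $c_q>0$ is the norm-equivalence constant with $c_q\,\|\bullet\|_q\le\|\bullet\|_{q-1}$.

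From here I would close the contradiction exactly as in Theorem~\ref{th:exact penal shared}. Writing $d\triangleq\wh{x}^{\wh{\nu}}-\bar{x}^{\wh{\nu}}$, convexity of $C^{\,\wh{\nu}}$ and $\wh{x}^{\wh{\nu}}\in C^{\,\wh{\nu}}$ give $d\in\mathcal{T}(\bar{x}^{\wh{\nu}};C^{\,\wh{\nu}})$, so optimality of $\bar{x}^{\wh{\nu}}$ in player $\wh{\nu}$'s penalized subproblem yields $\theta_{\wh{\nu}}(\bullet,\bar{x}^{-\wh{\nu}})^{\prime}(\bar{x}^{\wh{\nu}};d)+\rho\,\nabla_{x^{\wh{\nu}}} r_{\wh{\nu};q}(\bar{x})^{\mathrm{T}}d\ge 0$. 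Bounding the first term by $\mbox{Lip}_{\theta}\,\|d\|_2$ and the second by the displayed descent inequality, and using compactness of $C$ to convert the constant $\alpha\,c_q^{\,q-1}$ into a bound proportional to $\|d\|_2$ via $\alpha^{\prime}\triangleq\alpha\,c_q^{\,q-1}/(2\max_{z\in C}\|z\|_2)$ (note $d\ne 0$, since otherwise the hypothesized inequalities would read $0\le-\alpha$), I obtain $(\mbox{Lip}_{\theta}-\rho\,\alpha^{\prime})\,\|d\|_2\ge 0$. Taking $\bar{\rho}\triangleq\mbox{Lip}_{\theta}/\alpha^{\prime}$ then contradicts this for every $\rho>\bar{\rho}$, forcing $\bar{x}$ into the graph of $X$ and, by Proposition~\ref{pr:NEP feasible yields GNEP}(a), into the equilibrium set of the GNEP.

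The main obstacle is ensuring that the threshold $\bar{\rho}$ can be chosen uniformly over all players and all infeasible equilibrium candidates simultaneously, rather than depending on $\bar{x}$ or on which block $\wh{\nu}$ the hypothesis selects. This is precisely where the single constant $\alpha$ supplied by the assumption, the uniform Lipschitz constant $\mbox{Lip}_{\theta}$, the $x$-independent norm constant $c_q$, and the compactness of $C$ (giving a finite, $x$-independent bound $\max_{z\in C}\|z\|_2$) all combine to make $\alpha^{\prime}$, and hence $\bar{\rho}$, independent of the equilibrium under consideration. The one genuinely nontrivial calculation is the conversion of the componentwise bounds into the aggregate descent bound, and that is essentially identical to the computation already carried out in Lemma~\ref{lm:shared Slater}.
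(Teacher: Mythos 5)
Your proposal is correct and follows essentially the same route as the paper: part (a) by the standard existence theorem for NEPs with compact convex strategy sets and continuous player-convex costs, and part (b) by reducing to graph membership via Proposition~\ref{pr:NEP feasible yields GNEP}(a), converting the componentwise gradient bounds into the aggregate descent inequality $\nabla_{x^{\wh{\nu}}} r_{\wh{\nu};q}(x^*)^{\mathrm{T}}(\wh{x}^{\wh{\nu}}-x^{*,\wh{\nu}})\le-\alpha\,c_q^{q-1}$ through the gradient formula \eqref{eq:lk-gradient}, and closing with the first-order optimality condition and the same constant $\wh{\alpha}=\alpha\,c_q^{q-1}/(2\max_{z\in C}\|z\|_2)$. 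Your explicit remarks that $d\neq 0$ and that the threshold $\bar{\rho}=\mbox{Lip}_{\theta}/\wh{\alpha}$ is uniform in the candidate equilibrium are accurate refinements of details the paper leaves implicit.
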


\begin{proof}  Assertion (a) follows from a standard existence result for a NEP. 
%Assertion (b) follows from Theorem~\ref{thm:nlp-exact-penalization} and can be proved as follows.  By this theorem, it follows that
%there exists $\rho_{\nu} > 0$ such that for all $\rho > \rho_{\nu}$ and every every $x^{-\nu} \in C^{\, -\nu}$,
%every optimal solution of the penalized optimization problem
%$\displaystyle{
%\operatornamewithlimits{\mbox{minimize}}_{x^{\nu} \in C^{\, \nu}}
%} \, \theta_{\nu}(x^{\nu},x^{-\nu}) + \rho \, r_{\nu;q}(x^{\nu},x^{-\nu})$ is an optimal solution of the problem
%$\displaystyle{
%\operatornamewithlimits{\mbox{minimize}}_{x^{\nu} \, \in \, C^{\, \nu} \, \cap X^{\nu}(x^{-\nu})}
%} \, \theta_{\nu}(x^{\nu},x^{-\nu})$; the penalty parameter $\rho_{\nu}$ is independent of $x^{-\nu}$ because of the uniform error bound
%constant $\gamma_{\nu}$ for all $x^{-\nu} \in C^{\, -\nu}$.   By letting $\bar{\rho} \triangleq \displaystyle{
%\max_{\nu \in [N]}
%} \, \rho_{\nu}$, assertion (b) follows readily.
In order to prove statement (b), let $x^*$ be an equilibrium solution of the NEP $(C; \theta + \rho \, r_{X;q})$
for some $\rho > 0$ which will be specified momentarily.
% As in the proof of part (b) of Theorem~\ref{th:exact penal shared}, 
By part (a) of Proposition~\ref{pr:NEP feasible yields GNEP}, it suffices to show that $x^*$ belongs to the
graph of the multifunction $X$.  
Assume the contrary; let $\wh{x}$ and $\wh{\nu}$ be the vector and player label, respectively, associated with $x^*$ 
as stipulated by the assumption in this part.
% Let $\Xi \, \triangleq \, \left\{ \, \nu \, \in \, [ N ] \, \mid \, x^{*,\nu} \, \not\in \, X^{\nu}(x^{*,-\nu}) \, \right\}$
% which must be nonempty because of the assumption on $x^*$.
% For each $\nu \in \Xi$ and every $j = 1, \cdots, m_{\nu}$, we have
% \[ \begin{array}{rll}
% -\alpha \, \geq \, g_j^{\nu}(\wh{x}^{\, \nu},x^{*,-\nu})& \geq & g_j^{\nu}(x^*) + \nabla_{x^{\nu}} g_j^{\nu}(x^*)^T( \, \wh{x}^{\, \nu} - x^{*,\nu} \, )
% \\ [5pt]
% \mbox{and} \epc 0 & = & h_j^{\nu}(x^*) + \nabla_{x^{\nu}} h_j^{\nu}(x^*)^T( \, \wh{x}^{\, \nu} - x^{*,\nu} \, ),
% \end{array} \]
% which imply
% \[ \begin{array}{rll}
% \left[ \, \max( \, g_i^{\nu}(x^*), \, 0 \, ) \, \right]^{q-1} \, \nabla_{x^{\nu}} g_{i}^{\nu}(x^*)^T ( \, \wh{x}^{\, \nu} - x^{*,\nu} \, )
% & \leq & -\alpha \, \left[ \, \max( \, g_i^{\nu}(x^*), \, 0 \, ) \, \right]^{q-1} \\ [0.1in]
% \mbox{and} \epc | \, h_{j}^{\nu}(x^*) \, |^{q-1} \, ( \, \sgn h_{j}^{\nu}(x^*) \, ) \, \nabla_{x^{\nu}} h_{j}^{\nu}(x^*)^T( \, \wh{x}^{\, \nu} - x^{*,\nu} \, )
% & = & -| \, h_j^{\nu}(x^*) \, | \, \left| \, h_j^{\nu}(x^* \, \right|^{q-1} .
% \end{array}
% \]
% Clearly, these inequalities also hold for all $\nu$ such that $x^{*,\nu} \in X^{\nu}(x^{*,-\nu})$.
% Hence,
By the gradient formula (\ref{eq:lk-gradient}), we deduce,
\[ \begin{array}{l}
\nabla_{x^{\wh{\nu}}} r_{\wh{\nu};q}(x^*)^T( \, \wh{x}^{\, \wh{\nu}} - x^{*,\wh{\nu}} \, ) \\ [5pt]
= \, \displaystyle{
\frac{1}{r_{\wh{\nu};q}^{q-1}(x^*)}
} \, \left[ \begin{array}{l}
\displaystyle{
\sum_{i \in [ m_{\wh{\nu}} ]}
} \left[ \, \max( \, g_i^{\wh{\nu}}(x^*), \, 0 \, ) \, \right]^{q-1} \, \nabla_{x^{\wh{\nu}}} g_{i}^{\wh{\nu}}(x^*) + \\ [0.25in]
\displaystyle{
\sum_{j \in [ p_{\wh{\nu}} ]}
} \, | \, h_{j}^{\wh{\nu}}(x^*) \, |^{q-1} \, ( \, \sgn h_{j}^{\wh{\nu}}(x^*) \, ) \, \nabla_{x^{\wh{\nu}}} h_{j}^{\wh{\nu}}(x^*)
\end{array} \right]^T( \, \wh{x}^{\, \wh{\nu}} - x^{*,\wh{\nu}} \, ) \\ [0.45in]
\leq \, -\alpha \, \left( \, \displaystyle{
\frac{r_{\wh{\nu},q-1}(x^*)}{r_{\wh{\nu},q}(x^*)}
} \, \right)^{q-1} \, \leq \, -\alpha \, c_q^{q-1},
\end{array} \]
where $c_q > 0$ is the constant such that $ c_{q} \,\| \bullet \|_q \leq \| \bullet \|_{q-1}$.
Since
\[
x^{*,\wh{\nu}} \, \in \, \displaystyle{
\operatornamewithlimits{\mbox{minimize}}_{x^{\wh{\nu}} \, \in \, C^{\, \wh{\nu}} \, \cap \, X^{\wh{\nu}}(x^{*,-\wh{\nu}})}
} \ \theta_{\wh{\nu}}(x^{\wh{\nu}},x^{*,-\wh{\nu}}) + \rho \, r_{\wh{\nu};q}(x^{\wh{\nu}},x^{*,-\wh{\nu}}),
\]
we deduce
\[ \begin{array}{lll}
0 & \leq &
\theta_{\wh{\nu}}(\bullet,x^{*,-\wh{\nu}})^{\, \prime}(x^{*,\wh{\nu}};\wh{x}^{\, \wh{\nu}} - x^{*,\wh{\nu}}) +
\rho \, \nabla_{x^{\wh{\nu}}} r_{\wh{\nu};q}(x^*)^T( \, \wh{x}^{\, \wh{\nu}} - x^{*,\wh{\nu}} \, )
\\ [0.1in]
& \leq & \mbox{Lip}_{\theta} \, \| \, \wh{x}^{\, \wh{\nu}} - x^{*,\wh{\nu}} \, \|_2 - \rho \, \alpha \, c_q^{q-1}
\, \leq \, \left[ \, \mbox{Lip}_{\theta} - \rho \, \wh{\alpha} \, \right] \, \| \, \wh{x}^{\, \wh{\nu}} - x^{*,\wh{\nu}} \, \|_2.
\end{array} \]
where $\wh{\alpha} \triangleq \displaystyle{
\frac{\alpha \, c_q^{q-1}}{2 \, \displaystyle{
\max_{x \in C}
} \, \| \, x \, \|_2}
}$.
Provided that $\rho > \wh{\alpha}$, the above inequality yields a contradiction.
\end{proof}

We close this paper by showing that the commonly used EMFCQ is a more restrictive assumption than the one imposed in
part (b) of Theorem~\ref{th:finitely representable}.  Besides the fact that the EMFCQ does not admit equality constraints, there
is another subtle restriction in it; namely, included in this CQ is a condition on a feasible vector in the graph of the multifunction $X$,
whereas the condition in part (b) of Theorem~\ref{th:finitely representable} does not contain such a restriction.  
We state the EMFCQ for a convex set $C$ as follows, see also \cite[Definition~4.3]{CominettiFacchineiLasserre12}. 
%For simplicity, we state the EMFCQ for a convex set $C$ and for the constraint functions $g_i^{\nu}(\bullet,x^{-\nu})$ being convex; see \cite[Definition~4.3]{CominettiFacchineiLasserre12} for the statement of this CQ for a nonconvex set $C$ in terms of tangent vectors.

\gap

{\bf (EMFCQ)}  This CQ is said to hold for the GNEP \( (C,X; \theta) \), where each
\begin{equation}
\label{eq:X-g}
X^{\nu}(x^{-\nu}) \, = \, \left\{ \, x^{\nu} \, \mid \, g^{\nu}(x^{\nu},x^{-\nu}) \, \leq \, 0 \, \right\}
\end{equation}
with each $g_i^{\nu}(\bullet,x^{-\nu})$ being convex and continuously differentiable, if for every $x \in C$ and \( \nu \in [N] \), there exists a vector $\wh{x} \in C$ such that
\[
\nabla_{x^{\nu}} g^{\nu}_i(x^{\nu},x^{-\nu})^T( \wh{x}^{\nu} - x^{\nu} ) \, < \, 0 \epc \mbox{if $g_i^{\nu}(x^{\nu}, x^{-\nu}) \geq 0$}.
\]
Note that this CQ restricts even a vector $x \in C$ belonging to the graph of the multifunction $X$, i.e., even
for a feasible vector $x$ to the GNEP.  The reason
to include such vectors in the CQ is so that the Karush-Kuhn-Tucker conditions become valid if $x$ turns out be an equilibrium solution of the GNEP.  It turns out
that as far as exact penalization is concerned, the validity of the CQ can be somewhat relaxed.
We have the following result.

\begin{proposition} \label{prop: emfcq2strong-descent} \rm
Let each \( C^{\, \nu} \) be a convex compact subset of $\mathbb{R}^{n_{\nu}}$ and each \( X^{\nu} \) be defined as in \eqref{eq:X-g} with each $g_{i}^{\nu}(\bullet,x^{-\nu})$ being convex and continuously differentiable.  If EMFCQ holds for the GNEP \( (C, X; \theta) \) , then the assumption in part (b) of Theorem is valid.
\end{proposition}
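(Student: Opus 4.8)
The plan is to exploit that EMFCQ is demanded at \emph{every} $x \in C$ and every player $\nu$, so in particular it supplies a strict descent direction at every infeasible point, and then to upgrade this pointwise strictness into the uniform bound $-\alpha$ required by part~(b) of Theorem~\ref{th:finitely representable} by means of compactness of $C$ and an upper semicontinuity argument. First I would note that under \eqref{eq:X-g} there are no equality constraints, so the second (equality) bullet of the hypothesis in part~(b) is vacuous and only the inequality bullet must be produced. For an $x \in C$ outside the graph of $X$ there is at least one player $\wh{\nu}$ with $x^{\wh{\nu}} \notin X^{\wh{\nu}}(x^{-\wh{\nu}})$, i.e.\ with $g_i^{\wh{\nu}}(x) > 0$ for some $i$; applying EMFCQ at this $x$ and this $\wh{\nu}$ yields an $\wh{x} \in C$ with $\nabla_{x^{\wh{\nu}}} g_i^{\wh{\nu}}(x)^{\mathrm T}(\wh{x}^{\wh{\nu}} - x^{\wh{\nu}}) < 0$ for all $i$ with $g_i^{\wh{\nu}}(x) \ge 0$, hence for all violated indices. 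The only thing missing is uniformity of the gap.

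To extract a single $\alpha$, I would encode the EMFCQ direction through a minimax value. For each $\nu$ set $I_\nu(x) \triangleq \{\, i \in [m_\nu] \mid g_i^\nu(x) \ge 0 \,\}$ and, on the set where $I_\nu(x) \neq \emptyset$, define
\[
\beta_\nu(x) \, \triangleq \, \min_{\wh{x} \in C} \ \max_{i \in I_\nu(x)} \ \nabla_{x^\nu} g_i^\nu(x)^{\mathrm T}( \wh{x}^\nu - x^\nu ).
\]
The inner maximum is a finite maximum of functions affine in $\wh{x}$, hence continuous, so the minimum over the compact set $C$ is attained; and EMFCQ is precisely the statement that $\beta_\nu(x) < 0$ whenever $I_\nu(x) \neq \emptyset$. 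A minimizer $\wh{x}$ of $\beta_\nu(x)$ then satisfies $\nabla_{x^\nu} g_i^\nu(x)^{\mathrm T}(\wh{x}^\nu - x^\nu) \le \beta_\nu(x)$ simultaneously for \emph{all} $i \in I_\nu(x)$, which already has the form required in part~(b); what remains is to bound $\beta_\nu$ away from zero.

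The crux is that $\beta_\nu$ is upper semicontinuous despite the jumps in the active set $I_\nu(\cdot)$. I would first observe that the set-valued map $x \mapsto I_\nu(x)$ has closed graph: if $x_k \to x_0$ and $i \in I_\nu(x_k)$ along a subsequence, then $g_i^\nu(x_k) \ge 0$ passes to the limit, giving $i \in I_\nu(x_0)$; since $[m_\nu]$ is finite this forces $I_\nu(x_k) \subseteq I_\nu(x_0)$ for all large $k$. Feeding a minimizer $\wh{x}_0$ of $\beta_\nu(x_0)$ into the definition and using this inclusion gives
\[
\beta_\nu(x_k) \, \le \, \max_{i \in I_\nu(x_k)} \nabla_{x^\nu} g_i^\nu(x_k)^{\mathrm T}( \wh{x}_0^\nu - x_k^\nu ) \, \le \, \max_{i \in I_\nu(x_0)} \nabla_{x^\nu} g_i^\nu(x_k)^{\mathrm T}( \wh{x}_0^\nu - x_k^\nu ),
\]
whose right-hand side, a maximum over the \emph{fixed} finite set $I_\nu(x_0)$ of continuous functions, converges to $\beta_\nu(x_0)$; hence $\limsup_k \beta_\nu(x_k) \le \beta_\nu(x_0)$. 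Restricting $\beta_\nu$ to the compact set $K_\nu \triangleq \{\, x \in C \mid \max_{i} g_i^\nu(x) \ge 0 \,\}$, an upper semicontinuous function on a compact set attains its maximum, which is strictly negative; I would then put $\alpha_\nu \triangleq -\max_{x \in K_\nu} \beta_\nu(x) > 0$ (discarding any player with $K_\nu = \emptyset$) and $\alpha \triangleq \min_{\nu} \alpha_\nu > 0$.

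Finally I would assemble the pieces: given $x \in C$ outside the graph of $X$, choose an infeasible player $\wh{\nu}$ and take $\wh{x} \in C$ to be a minimizer defining $\beta_{\wh{\nu}}(x)$. Every index $i$ with $g_i^{\wh{\nu}}(x) > 0$ lies in $I_{\wh{\nu}}(x)$, so $\nabla_{x^{\wh{\nu}}} g_i^{\wh{\nu}}(x)^{\mathrm T}(\wh{x}^{\wh{\nu}} - x^{\wh{\nu}}) \le \beta_{\wh{\nu}}(x) \le -\alpha$, which is exactly the hypothesis of part~(b) of Theorem~\ref{th:finitely representable} (with the equality bullet empty). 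The main obstacle, already isolated above, is that the active-set map $I_\nu(\cdot)$ is discontinuous, so a naive continuous-selection-plus-compactness argument fails; it is the one-sided (upper) semicontinuity of $\beta_\nu$, which survives these jumps precisely because enlarging the index set can only raise the maximum, that delivers the uniform constant.
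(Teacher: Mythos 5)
Your proof is correct and takes essentially the same route as the paper's: your $\beta_{\nu}(x)$ is exactly the paper's $\displaystyle{\min_{y \in C}} \, \alpha_{\nu}(x,y)$ (up to replacing the violated set $\{ i : g_i^{\nu}(x) > 0 \}$ by the EMFCQ set $\{ i : g_i^{\nu}(x) \geq 0 \}$), and your upper-semicontinuity-plus-compactness argument is the direct form of the paper's sequential proof by contradiction, which likewise extracts a convergent sequence of infeasible points and passes the finitely many index sets to the limit. The only cosmetic difference is that you carry the $\geq 0$ active set throughout, which makes the inclusion $I_{\nu}(x_k) \subseteq I_{\nu}(x_0)$ immediate, whereas the paper works with the $>0$ violated set and only enlarges it to the $\geq 0$ set at the limit point where EMFCQ is invoked.
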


\begin{proof} The proof is by contradiction.  We first restate the assumption in part (b) in Theorem~\ref{th:finitely representable} in a
set-theoretic form that is more conducive to obtain the contradiction.  For a given $x \in C$,
let
\[
{\cal I}_{\nu}(x) \, \triangleq \, \left\{ \, i \, \in \, \{ \, 1, \cdots, m_{\nu} \, \} \, \mid \, g_i^{\, \nu}(x^{\nu},x^{-\nu}) \, > \, 0 \right\}
\]
be the index set of violated constraints in $X^{\nu}(x^{-\nu})$ by $x^{\nu}$.   The statement that $x^{\nu} \not\in X^{\nu}(x^{-\nu})$
is equivalent to ${\cal I}_{\nu}(x) \neq \emptyset$.  Also let
\[
\wh{\cal I}(x) \, \triangleq \, \left\{ \, \nu \, \in \, [ N ] \, \mid \, {\cal I}_{\nu}(x) \, \neq \, \emptyset \, \right\}
\]
be the player labels $\nu$ for which there is at least one constraint in $X^{\nu}(x^{-\nu})$ that is not satisfied by $x^{\nu}$.
Thus $x \not\in \mbox{ graph}(X)$ if and only if $\wh{\cal I}(x) \neq \emptyset$.
Coupled with a $y \in C$, define
the quantity
\[
\alpha_{\nu}(x,y) \, \triangleq \, \displaystyle{
\max_{i \in {\cal I}_{\nu}(x)}
} \ \nabla_{x^{\nu}} g_i^{\, \nu}(x^{\nu},x^{-\nu})^T( y^{\, \nu} - x^{\nu} ).
\]
By convention, the maximum over the empty set is equal to $-\infty$.  For a given vector $x \in C$ not in $\mbox{graph}(X)$ and
a player label $\nu \in \wh{\cal I}(x)$, the quantity $\alpha_{\nu}(x,\bullet)$ is a continuous function on $C$.  Thus it is justified
to define the scalar
\[
\gamma \, \triangleq \, \displaystyle{
\sup_{x \in C \setminus {\rm graph}(X)}
} \, \left[ \, \displaystyle{
\min_{\nu \in \wh{\cal I}(x)}
} \, \left( \, \displaystyle{
\min_{y \in C}
} \, \alpha_{\nu}(x,y) \, \right) \, \right],
\]
where we use sup instead of max because the latter may not be attained.
Now, the assumption in part (b) in Theorem~\ref{th:finitely representable} becomes $\gamma < 0$.  So to prove the proposition by
contradiction, suppose $\gamma \geq 0$.    Then there exist a sequence of vectors $\{ x^k \} \subset C$ and a sequence of positive
scalars $\{ \varepsilon_k \} \downarrow 0$ such that for every $k$,
$x^k \not\in \mbox{ graph}(X)$ and there exists $\nu_k \in \wh{\cal I}(x^k)$ such that $\displaystyle{
\min_{y \in C}
} \, \alpha_{\nu_k}(x^k,y) \geq -\varepsilon_k$.  Without loss of generality, we may assume that the sequence $\{ x^k \}$ converges
to a vector $x^{\infty} \in C$.
Since there are only finitely many players, it follows that there exists a subsequence $\{ x^k \}_{k \in \kappa}$ and a player label
$\bar{\nu}$ such that for all $k \in \kappa$, we have $\bar{\nu} \in \wh{\cal I}(x^k)$ and $\displaystyle{
\min_{y \in C}
} \, \alpha_{\bar{\nu}}(x^k,y) \geq -\varepsilon_k$.  Hence for all $y \in C$, $\alpha_{\bar{\nu}}(x^k,y) \geq -\varepsilon_k$
for all $k \in \kappa$.  Passing to the limit $k ( \in \kappa ) \to \infty$, we deduce that
$\displaystyle{
\min_{y \in C}
} \, \alpha_{\bar{\nu}}(x^{\infty},y) \geq 0$.  Since $\bar{\nu} \in \wh{\cal I}(x^k)$ for all $k \in \kappa$, it follows that
$g_i^{\, \bar{\nu}}(x^{\infty}) \geq 0$ for all $i \in {\cal I}_{\bar{\nu}}(x^{\infty})$.  Thus we have obtained the following
\[
\forall \, y \, \in \, C \ \exists \, i \mbox{ such that } g_i^{\, \bar{\nu}}(x^{\infty}) \, \geq \, 0
\mbox{ and } \nabla_{x^{\nu}} g_i^{\, \bar{\nu}}(x^{\infty,\bar{\nu}},x^{\infty,-\bar{\nu}})^T( y^{\, \bar{\nu}} - x^{\infty,\bar{\nu}} ) \, \geq \, 0.
\]
But this contradicts the EMFCQ applied to $x^{\infty}$ which asserts
that there exists $\wh{x} \in C$ such that $\nabla_{x^{\nu}} g_i^{\, \nu}(x^{\infty,\nu},x^{\infty,-\nu})^T( \wh{x}^{\, \nu} - x^{\infty,\nu} ) < 0$
whenever $g_i^{\, \nu}(x^{\infty,\nu},x^{\infty,-\nu}) \geq 0$.
\end{proof}
% In the case where the graph of each $X^{\nu}$ is equal to a common, convex, finitely representable set
% $D \, = \, \left\{ \, x \in\re^n \, \mid \, g(x) \leq \, 0
% \mbox{ and } h(x) \, = \, 0 \, \right\}$, the assumption in part (c) of Theorem~\ref{th:finitely representable} simplifies to
% a standard slater assumption:
%
% \gap
%
% (c$^{\, \prime}$)$_{\rm shared}$ if there exists $y \in C$ such that $h(y) = 0$ and $g(y) < 0$,
% then part (c) of Theorem~\ref{th:finitely representable} holds with the residual function
% $r_q(x) \, \triangleq \, \left\| \, \left( \begin{array}{c}
% \max( \, g(x), \, 0 \, ) \\ [3pt]
% h(x)
% \end{array} \right) \, \right\|_q$
%
% \gap
%
% We omit the proof.
% Statement (c$^{\, \prime}$)$_{\rm shared}$ can be compared to Corollary~\ref{co:exact penalization under sufficient} under assumption (a)
% in this corollary.  Namely, the Lipschitz error bound requirement of the residual function $r_D$ is no longer needed; instead, the residual
% function in the above statement is the specific $\ell_q$-norm residual function of the common finitely representable coupled
% constraints.  Compared with the shared-constrained case, the assumption in part (c) of Theorem~\ref{th:finitely representable} is admittedly
% more stringent; it is not clear if this part can be strengthened with a looser assumption.

{\bf Conclusion.}  In this paper, we have provided a comprehensive exact penalization theory for the GNEP, by putting together a systematic study
of the subject that begins with an illustrative example and ends with a pair of results for the cases of finitely representable constraint sets.  In the process,
we have clarified the role of error bounds and constraint qualifications in the developed theory; this is something that has not been sufficiently
emphasized in the existing literature of the GNEP.  How these exact penalization results can be put to use in the development of improved algorithms for
computing an equilibrium solution of the GNEP remains to be investigated.  

\gap

{\bf Acknowledgement.}  The authors are grateful to Christian Kanzow and Francisco Facchinei for some helpful comments on a draft of this paper.

\end{document}